\documentclass{llncs}
\usepackage{amssymb}
\usepackage{amsmath}
\usepackage{color}

\newcommand{\cS}{\mathcal S}
\newcommand{\cW}{\mathcal W}
\newcommand{\cM}{\mathcal M}
\newcommand{\cPS}{\mathcal{PS}}

\newcommand{\F}{{\mathbb F}}
\newcommand{\fp}{{\mathbb F}_{p}}
\newcommand{\fq}{{\mathbb F}_{q}}

\newcommand{\fqk}{{\mathbb F}_{q^k}}
\newcommand{\fpn}{{\mathbb F}_{p^n}}

\newcommand{\fpm}{{\mathbb F}_{p^m}}
\newcommand{\ftwo}{{\mathbb F}_2}

\newcommand{\fthreen}{{\mathbb F}_{3^n}}

\newcommand{\fthreek}{{\mathbb F}_{3^k}}
\newcommand{\Tr}{{\operatorname{Tr}}}
\newcommand{\N}{{\operatorname{N}}}

\begin{document}

\pagestyle{plain}

\title{Generalized $p$-ary $\cPS$ Bent Functions$^\dagger$}
\titlerunning{Ternary Binomial Bent Functions}

\renewcommand{\thefootnote}{$\dagger$}
\footnotetext{This work was supported by the Research Council of Norway.}

\author{Alexander Kholosha \and Mohit Pal}

\institute{The Selmer Center\\
Department of Informatics, University of Bergen\\
P.O. Box 7800, N-5020 Bergen, Norway\\
\email{oleksandr.kholosha@uib.no, mathmohit@outlook.com}}

\maketitle

\thispagestyle{plain}

\pagestyle{plain}

\begin{abstract}
We use $m$-dimensional partial spreads of $\fp^{2n}$, where $p$ is an odd prime, $n$ is a positive integer and $m$ divides $n$, to 
construct two classes of bent functions from $\fp^{2n}$ to $\fp$. Our construction generalizes the classes of $p$-ary $\cPS^{-}$ and 
$\cPS^{+}$ bent functions proposed by P. Lison\v ek and H. Y. Lu (Des. Codes Cryptogr. 73 (2014), 209--216). 
\end{abstract}

\section{Introduction}
Boolean bent functions were first introduced by Rothaus in 1976 as an interesting combinatorial object. They are functions from $\ftwo^n$ 
to $\ftwo$ for $n$ even, whose Hamming distance to the set of all affine functions is maximum. Equivalently, absolute square of their Walsh 
transform coefficients is constant. They are characterized by the property that their derivatives in nonzero directions are balanced 
(functions having this property are called {\em perfect nonlinear}). Later, the research in this area was stimulated by the significant 
relation to the topics in mathematics and computer science. Kumar, Scholtz and Welch in \cite{KuScWe85} generalized the notion of Boolean 
bent functions to the case of functions from $(\mathbb Z/q\mathbb Z)^n$ to $\mathbb Z/q\mathbb Z$, where $q$ is the power of a prime $p$ 
(which covers the case of functions from $\fp^n$ to $\fp$, called {\em $p$-ary functions}, which are the subject of the present paper). 
Complete classification of bent functions looks hopeless even in the binary case and the case of $p$-ary bent functions is still more 
complicated. However, many explicit methods are known for constructing bent functions, either from scratch (with so-called primary 
constructions) or from other bent functions (in so-called secondary constructions).

Let $\fpn$ be the finite field with $p^n$ elements, where $p$ is an odd prime and $n$ is a positive integer. It is well known that $\fpn$ 
can be endowed with a structure of an $n$-dimensional vector space over $\fp$. To emphasize the vector space structure of the field $\fpn$, 
we shall use the notation $\fp^n$. An \emph{$m$-spread} of $\fp^n$ is a set of pairwise disjoint (except for $0$) $m$-dimensional subspaces 
of $\fp^n$ whose union equals $\fp^n$. It is well known that an $m$-spread of $\fp^n$ exists if and only if $m$ divides $n$. A 
\emph{partial $m$-spread} of $\fp^n$ is a set of pairwise disjoint (except for $0$) $m$-dimensional subspaces of $\fp^n$. 

Given a function $f$ mapping $\fp^n$ to $\fp$, the {\em Walsh transforms} of $f$ as a complex-valued function on $\fp^n$ and its inverse 
are respectively defined by the following identities 
\[\cW_f(y)=\sum_{x\in\fp^n}\zeta_p^{f(x)-\langle x,y\rangle}\quad\mbox{and}\quad\zeta_p^{f(x)}=\frac{1}{p^n}\sum_{y\in\fp^n}\cW_f(y)\omega^{\langle x,y\rangle}\]
where $\zeta_p = e^{\frac{2\pi i}{p}}$ is the complex primitive $p^{\rm{th}}$ root of unity, $\langle\cdot,\cdot\rangle$ is the standard 
inner product, and where the elements of $\fp$ are considered as integers modulo $p$. We call $\cW_f(y)$ the Walsh coefficient of $f$ at 
$y$.
%$\Tr_n(x)=x+x^p+\cdots+x^{p^{n-1}}$ is the absolute trace function from $\fpn$ to $\fp$ 

As defined in \cite{KuScWe85}, $f$ is a {\em $p$-ary bent function} if all its Walsh transform coefficients satisfy $|\cW_f(y)|^2=p^n$. A 
$p$-ary bent function $f(x)$ is called {\em weakly regular} if we can fix a complex $u$ (called \emph{sign}) having unit magnitude such 
that $p^{-n/2}\cW_f(y)=u\omega^{f^*(y)}$ for every $y\in\fpn$ (then necessarily $u=\pm 1$ or $u=\pm i$) and $f^*:\fpn\mapsto \fp$ is called 
the {\it dual} of $f$. If $u=1$, then $f$ is {\em regular} (here necessarily $n$ is even or $p\equiv 1\pmod 4$).

In~\cite{LiLu14}, Lison\v{e}k and Lu proposed two classes of bent functions from $\fp^{2n}$ to $\fp$ using partial $n$-spread of 
$\fp^{2n}$. These classes generalize the classes $\cPS^{-}$ and $\cPS^{+}$ of the Boolean bent functions discovered by Dillon~\cite{Di74}. 
The authors called these classes of bent functions as $p$-ary $\cPS^{-}$ and $\cPS^{+}$ class, respectively. Citing Carlet 
\cite[p.~213]{Ca20} "The situation with ${\cPS}$ is then similar to the situation with general bent functions: we have a nice and simple 
definition, but no systematic way of determining all the elements that satisfy it". 

In this paper, we generalize the results of Lison\v{e}k and Lu and propose two classes of bent functions from $\fp^{2n}$ to $\fp$ using 
partial $m$-spreads of $\fp^{2n}$ where $m$ divides $n$. The paper is organized as follows. In Section~\ref{sec:prel}, we recall some basic 
notions and related results that will be used in the subsequent sections. In Sections~\ref{sec:PSminus}~and~\ref{sec:PSplus}, we propose 
two classes of bent functions from $\fp^{2n}$ to $\fp$ using partial $m$-spreads of $\fp^{2n}$ with $m$ dividing $n$. In the final 
Section~\ref{sec:monom}, we study the problem if there exist monomial bent functions with a generalized Dillon exponent that are not in the 
$\cPS^{-}$ class. 

\section{Preliminaries}
 \label{sec:prel}
Let $\fp^*:=\fp\setminus\{0\}$ and similarly, for $T\subseteq\fp^n$ denote $T^*:=T\setminus\{0\}$. For $a\in\fp^n$, denote 
\[a^{\perp}:=\{x\in\fp^n:\langle a,x\rangle=0\}\enspace.\]
We use following two lemmas in subsequent sections. 

\begin{lemma}
 \label{le:sspsum}
Let $a\in\fp^n$, $a\neq 0$, and assume that $T$ is a subspace of $\fp^n$ such that $T\not\subset a^{\perp}$. Then 
\[\sum_{x\in T}\zeta_p^{-\langle a,x\rangle}=0\enspace.\]
\end{lemma}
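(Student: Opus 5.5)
The plan is to view $\chi(x):=\zeta_p^{-\langle a,x\rangle}$ as a character of the additive group $T$ and show that it is nontrivial on $T$. Then the standard orthogonality argument gives a vanishing sum.

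First, the map $\phi\colon T\to\fp$, $\phi(x)=\langle a,x\rangle$, is $\fp$-linear because the inner product is bilinear. Since $T\not\subset a^{\perp}$, some $x_0\in T$ satisfies $\phi(x_0)\neq 0$. Hence the image of $\phi$ is a nonzero subspace of the one-dimensional space $\fp$, so $\phi$ is surjective. Its kernel $T\cap a^{\perp}$ therefore has index $p$ in $T$, and every fiber $\phi^{-1}(c)$, $c\in\fp$, is a coset of this kernel of size $|T|/p$.

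Grouping the sum by the value of $\phi$ then gives
\[\sum_{x\in T}\zeta_p^{-\langle a,x\rangle}=\frac{|T|}{p}\sum_{c\in\fp}\zeta_p^{-c}=0\enspace,\]
since the sum of all $p$-th roots of unity vanishes.

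Alternatively, one could avoid counting fibers. Let $S$ denote the sum. Substituting $x\mapsto x+x_0$, which permutes $T$ because $T$ is a subspace, gives $S=\zeta_p^{-\langle a,x_0\rangle}S$. Since $\zeta_p^{-\langle a,x_0\rangle}\neq 1$, this forces $S=0$.

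There is no real obstacle here. The only point requiring care is that the hypothesis $T\not\subset a^{\perp}$ is exactly what makes the character nontrivial on $T$. The assumption $a\neq 0$ is in fact implied by this hypothesis and plays no further role.
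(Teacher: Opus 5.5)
Your proof is correct and complete. Both the fiber-counting argument (surjectivity of $x\mapsto\langle a,x\rangle$ on $T$) and the translation argument $S=\zeta_p^{-\langle a,x_0\rangle}S$ are valid.

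The paper states this lemma without any proof, treating it as a standard orthogonality fact for additive characters. Your argument is exactly that standard justification. You also correctly read $T\not\subset a^{\perp}$ as ``$T$ is not contained in $a^{\perp}$''; under a proper-subset reading, the case $T=a^{\perp}$ would give the sum $|T|\neq 0$, and the lemma would fail.
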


\begin{lemma}[\cite{Ny91}]
 \label{le:Ny}
Let $n$ be a positive integer and $p$ a prime. Suppose that $f:\fp^{2n}\mapsto\fp$ is a bent function and for $j\in\fp$ denote $D_j:= 
f^{-1}(j)$. Then there exists $k\in\fp$ such that 
\begin{align*} 
\lvert D_k\rvert&=p^{2n-1}\pm(p-1)p^{n-1},\\
\lvert D_\ell\rvert&=p^{2n-1}\mp p^{n-1}~\mbox{for}~\ell\in\fp\setminus\{k\}\enspace.
\end{align*}
Here the $\pm$ signs are taken correspondingly. Moreover, for regular bent functions, the upper signs are taken.
\end{lemma}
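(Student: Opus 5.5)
The plan is to read the value distribution of $f$ off the single Walsh coefficient $\cW_f(0)$. By definition
\[\cW_f(0)=\sum_{x\in\fp^{2n}}\zeta_p^{f(x)}=\sum_{j\in\fp}\lvert D_j\rvert\,\zeta_p^{j}\in\mathbb Z[\zeta_p],\]
and bentness gives $\lvert\cW_f(0)\rvert^2=p^{2n}$. I will show that $\cW_f(0)=\varepsilon p^n\zeta_p^{k}$ for some $\varepsilon\in\{\pm1\}$ and $k\in\fp$. Then I will compare coefficients in the basis of $\mathbb Q(\zeta_p)$.

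\textbf{Step 1: the ideal generated by $\cW_f(0)$.} Work in the ring of integers $\mathbb Z[\zeta_p]$ of $K=\mathbb Q(\zeta_p)$. There $(p)=\mathfrak P^{p-1}$ with $\mathfrak P=(1-\zeta_p)$ the unique prime above $p$.

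Write $W=\cW_f(0)$. From $W\overline W=p^{2n}$ we get $(W)(\overline W)=\mathfrak P^{2n(p-1)}$, so $(W)=\mathfrak P^{a}$ for some $a$. Complex conjugation fixes $\mathfrak P$, because $\overline{1-\zeta_p}=1-\zeta_p^{-1}$ is an associate of $1-\zeta_p$. Hence $(\overline W)=\mathfrak P^{a}$ as well, which forces $2a=2n(p-1)$. Therefore $(W)=\mathfrak P^{n(p-1)}=(p^n)$, and $W=p^n u$ with $u$ a unit of $\mathbb Z[\zeta_p]$.

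This is the point where the even dimension $2n$ is used: an odd exponent of $p$ would not be a power of the ideal $(p)$.

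\textbf{Step 2: $u$ is a root of unity.} The Galois group of $K/\mathbb Q$ is abelian, so every automorphism $\sigma$ commutes with complex conjugation. Consequently $\lvert\sigma(W)\rvert^2=\sigma(W\overline W)=p^{2n}$ for every $\sigma$, which gives $\lvert\sigma(u)\rvert=1$ for all embeddings.

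By Kronecker's theorem, an algebraic integer all of whose conjugates have absolute value $1$ is a root of unity. The roots of unity in $K$ are exactly $\pm\zeta_p^k$, so $u=\varepsilon\zeta_p^k$.

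I expect Steps 1 and 2 to carry the real content of the argument. The rest is bookkeeping.

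\textbf{Step 3: comparing coefficients.} We now have
\[\sum_{j}\lvert D_j\rvert\zeta_p^{j}=\varepsilon p^n\zeta_p^{k}.\]
The only $\mathbb Q$-linear relation among $1,\zeta_p,\dots,\zeta_p^{p-1}$ is $\sum_j\zeta_p^j=0$. It follows that $\lvert D_j\rvert-\varepsilon p^n\delta_{j,k}$ takes a common value $c$ for all $j\in\fp$.

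Summing over $j$ and using $\sum_j\lvert D_j\rvert=p^{2n}$ gives $pc+\varepsilon p^n=p^{2n}$, so $c=p^{2n-1}-\varepsilon p^{n-1}$. Hence
\[\lvert D_\ell\rvert=p^{2n-1}\mp p^{n-1}\quad(\ell\neq k),\qquad \lvert D_k\rvert=p^{2n-1}\pm(p-1)p^{n-1},\]
where the upper sign corresponds to $\varepsilon=1$ and the signs are taken correspondingly.

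\textbf{Step 4: the regular case.} If $f$ is regular, then by definition $\cW_f(0)=p^{n}\zeta_p^{f^*(0)}$. This means $\varepsilon=+1$ and $k=f^*(0)$, so the upper signs hold.
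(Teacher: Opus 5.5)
The paper gives no proof of this lemma. It is imported from Nyberg \cite{Ny91} and used as a black box in Theorems~\ref{th:PSminus} and~\ref{th:PSplus}, so there is no in-paper argument to compare yours with.

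Your proof is correct and complete, and it follows the standard route to this result. First you establish $\cW_f(0)=\varepsilon p^n\zeta_p^k$, the classical even-dimension description of Walsh values going back to \cite{KuScWe85}. You rederive it from $(p)=(1-\zeta_p)^{p-1}$, the conjugation invariance of $\mathfrak P$, and Kronecker's theorem. Then you read off the $\lvert D_j\rvert$ from the fact that the kernel of $(c_j)\mapsto\sum_j c_j\zeta_p^j$ on $\mathbb Q^p$ is spanned by the all-ones vector. Every step checks out, and the argument also covers $p=2$.

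In Step~4 you do not actually need the pair $(\varepsilon,k)$ to be unique, and for $p=2$ it is not. For regular $f$, simply run Step~3 with the representation $\cW_f(0)=p^n\zeta_p^{f^*(0)}$.

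What writing the proof out buys over the bare citation is transparency about the sign. The sign is exactly the sign of $\zeta_p^{-k}\cW_f(0)$, so the lemma yields only the implication from regularity to the upper signs, not its converse. Keep this in mind where the proof of Theorem~\ref{th:PSminus} concludes that $f$ is regular from the upper signs: that step needs an argument beyond this lemma.
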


In~\cite{Ho04_1}, it was shown that a $p$-ary weakly regular bent function $f:\fpn\mapsto\fp$ can have algebraic degree at most $n(p-1)/2$ 
and the algebraic degree of a not weakly regular bent functions is upper bounded by $n(p-1)/2+1$. The following result is a generalization 
of \cite[Theorem~8]{AnMei22} where it was proved that all $p$-ary $\cPS$ bent functions attain the maximal algebraic degree (for regular 
bent functions). Note that we do not need $f$ to be bet here, as was assumed unnecessarily in \cite{AnMei22}.

\begin{theorem}
 \label{th:deg}
Let $p$ be an odd prime and $n$ be a positive integer. Assume $f:\fp^{2n}\mapsto\fp$ is such that $f(0,\dots,0)=0$ and there exists an 
$m$-dimensional subspace $U$ of $\fp^{2n}$ such that $f$ is a nonzero constant on $U^*$. Then $f$ has algebraic degree $\deg(f)\geq(p-1)m$. 
\end{theorem}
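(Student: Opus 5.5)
The plan is to pass from $f$ to its restriction to $U$, whose degree can be computed exactly, and then use the fact that restricting to a subspace cannot increase algebraic degree. Throughout, the algebraic degree of a function $\fp^{N}\mapsto\fp$ means the total degree of its unique reduced polynomial representation, in which every variable occurs with exponent at most $p-1$.

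First I reduce to coordinates in which $U$ is a coordinate subspace. Choose an invertible $\fp$-linear map $L$ of $\fp^{2n}$ with $L(U)=\{(x_1,\dots,x_m,0,\dots,0)\}$, and set $g:=f\circ L^{-1}$. Substituting linear forms into a polynomial cannot raise its total degree, and reducing via $x_i^p=x_i$ cannot raise it either. So $\deg(g)\le\deg(f)$, and applying the same argument to $L$ gives equality. Moreover $g(0)=0$ and $g$ equals the nonzero constant $c$ on the nonzero vectors of the coordinate subspace.

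Next I set $h(x_1,\dots,x_m):=g(x_1,\dots,x_m,0,\dots,0)$. Putting the last $2n-m$ variables equal to zero in the reduced polynomial of $g$ simply deletes monomials. The result is still reduced, so $\deg(h)\le\deg(g)$. By hypothesis $h(0)=0$ and $h(x)=c$ for $x\ne0$, which gives the explicit formula
\[h(x)=c\Bigl(1-\prod_{i=1}^{m}\bigl(1-x_i^{p-1}\bigr)\Bigr).\]
This holds because $1-x_i^{p-1}$ equals $1$ when $x_i=0$ and $0$ otherwise, so the product is the indicator of the origin. The right-hand side is already reduced, since each exponent is at most $p-1$. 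It contains the monomial $(-1)^{m+1}c\,x_1^{p-1}\cdots x_m^{p-1}$ with nonzero coefficient, and this monomial cannot cancel against any other term. By uniqueness of the reduced representation, $\deg(h)=(p-1)m$.

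Combining the two steps gives $\deg(f)=\deg(g)\ge\deg(h)=(p-1)m$, which is the claim. The only point needing care, and the step I expect to be the main obstacle, is justifying that composing with a linear map and restricting to a subspace never increases algebraic degree. This rests on uniqueness of the reduced polynomial representation of functions $\fp^{N}\mapsto\fp$, together with the observation that reduction modulo $x_i^p-x_i$ never increases total degree. Note that no bentness assumption on $f$ is used anywhere.
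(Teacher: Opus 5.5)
Your proof is correct and follows the same route as the paper: move $U$ to a coordinate subspace by a linear change of variables, restrict to the first $m$ coordinates (which cannot raise the degree), and read off the nonvanishing coefficient of $x_1^{p-1}\cdots x_m^{p-1}$. Your indicator formula $c\bigl(1-\prod_{i=1}^m(1-x_i^{p-1})\bigr)$ is the paper's Lagrange interpolation written compactly, and it gives the exact leading coefficient $(-1)^{m+1}c$; the paper's stated coefficient $p^m-1$ drops the factor $(-1)^m c$, though it is nonzero modulo $p$ either way.
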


\begin{proof}
Let $f(x_1,\ldots,x_{2n})\equiv c$ on $U^*$ for some $c\in\fp^*$. By a coordinate transformation $A$ of $\fp^{2n}$, we have 
\[A(U)=\{(\alpha_1,\ldots,\alpha_m,0,\ldots,0)\}=\fp^m\times\{(0,\ldots,0)\}=:H\enspace.\]
Denoting $g=f\circ A^{-1}$ we obtain $g(x_1,\ldots,x_{2n})\equiv c$ on $H^*$. Also denote 
\[g'(x_1,\ldots,x_{m}):= g(x_1,\ldots,x_{m},0,\ldots,0).\]
Note that $\deg(f)=\deg(g)\geq\deg(g')$. Hence, it suffices to show that $\deg(g')=(p-1)m$. By Lagrange interpolation, we write 
\[g'(x_1,\ldots,x_{m})=\sum_{(\alpha_1,\ldots,\alpha_{m})\in\fp^m}g'(\alpha_1,\ldots,\alpha_{m})\prod_{i=1}^m (1-(x_i-\alpha_i)^{p-1})\enspace.\]
Since $g'(0,\ldots,0)=0$ and $g'(x_1,\ldots,x_{m})\equiv c$ on $H^*$, we have 
\[g'(x_1,\ldots,x_{m})=(p^m-1)x_1^{p-1}\cdots x_m^{p-1}+h(x_1,\ldots,x_m)\]
for some $h\in\fp[x_1,\ldots,x_{m}]$ of degree less than $(p-1)m$. Hence, $g'$ has algebraic degree $(p-1)m$ and $f$ has algebraic degree 
at least $(p-1)m$.\qed 
\end{proof}

\begin{corollary}
Let $p$ be an odd prime and $n$ be a positive integer. Let $f:\fp^{2n}\mapsto\fp$ be a function having algebraic degree less than $(p-1)m$ 
for some $m\leq 2n$. Then, $f$ cannot be constant on $U^*$ for any $m$-dimensional subspace $U$ of $\fp^{2n}$. 
\end{corollary}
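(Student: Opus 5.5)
The corollary is the contrapositive of Theorem~\ref{th:deg}, applied after a normalisation of $f$. Suppose, for contradiction, that $f$ has algebraic degree less than $(p-1)m$ and is constant on $U^*$ for some $m$-dimensional subspace $U$ of $\fp^{2n}$, say $f\equiv c$ on $U^*$. Theorem~\ref{th:deg} assumes $f(0,\dots,0)=0$ and that the constant on $U^*$ is nonzero, so we first adjust $f$ to meet both conditions without increasing its degree.

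Set $d:=f(0,\dots,0)$ and $g:=f-d$. Then $g(0,\dots,0)=0$, $g\equiv c-d$ on $U^*$, and $\deg(g)\le\max(\deg f,0)<(p-1)m$. If $c\neq d$, Theorem~\ref{th:deg} applied to $g$ gives $\deg(g)\ge(p-1)m$, which is a contradiction.

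The remaining case is $c=d$, where $f$ is constant on all of $U$. Theorem~\ref{th:deg} does not apply directly here, and I would redo the interpolation step from its proof instead. Transport $U$ to $H=\fp^m\times\{0\}$ by a linear change of coordinates and restrict to obtain $g'$ on $\fp^m$, so $g'\equiv c$ on $\fp^m$. This restriction has degree $0$ and gives no bound. So the statement as literally written needs the nonzero-constant hypothesis, i.e.\ "$f$ cannot be constant on $U^*$ with a value different from $f(0)$". Indeed, a constant function has degree $0<(p-1)m$ and is constant on every $U^*$, which is a counterexample.

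This case is the main obstacle. I would resolve it by reading the corollary with the implicit normalisation of Theorem~\ref{th:deg}: either $f(0)=0$ with a nonzero constant, or more generally a value on $U^*$ different from $f(0)$. Under that reading the proof is the two-line reduction above. I would state explicitly that the degenerate case $f|_U$ constant is excluded, since otherwise the claim fails.
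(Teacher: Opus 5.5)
Your proposal is correct and follows the intended route. The paper gives no separate proof; the corollary is meant to be read off as the contrapositive of Theorem~\ref{th:deg}. Your shift $g=f-f(0)$ correctly handles the normalisation $f(0)=0$ that the theorem assumes, without raising the degree.

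You are also right that the corollary as literally stated is false. A constant function is one counterexample. So is $f(x)=x_{2n}$ with $m<2n$ and $U\subseteq\{x_{2n}=0\}$: it has degree $1<(p-1)m$ and is constant on $U^*$. The hypothesis that the value of $f$ on $U^*$ differs from $f(0)$ must therefore be added, exactly as you propose.
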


\section{$p$-ary $m$-$\cPS^{-}$ class}
 \label{sec:PSminus}
In \cite{LiLu14}, Lison\v{e}k and Lu proposed two classes of bent functions from $\fp^{2n}$ to $\fp$, namely $p$-ary $\cPS^{-}$ and 
$\cPS^{+}$ classes. The following result is a generalization of their Theorem 3.3 that suggests a method for constructing bent functions 
$f:\fp^{2n}\mapsto\fp$ using partial $m$-spreads of $\fp^{2n}$ with $m$ dividing $n$.
 
\begin{theorem}
 \label{th:PSminus}
Let $p$ be an odd prime and $m,n$ be positive integers such that $m<2n$ and $p^n>3$. Assume $\cS$ is a partial $m$-spread of $\fp^{2n}$ and 
$f:\fp^{2n}\mapsto\fp$ is such that for each $T\in\cS$, $f$ is constant on $T^*$. Moreover, suppose that 
$f^{-1}(0)=\fp^{2n}\setminus\bigcup_{T\in\cS}T^*$. Then $f$ is bent if and only if $m\mid n$ and $f^{-1}(j)\cup\{0\}$ is a union of exactly 
$\frac{p^{n-1}(p^n-1)}{p^m-1}$ elements of $\cS$ for every $j\in\fp^*$ and for every nonzero $a\in\fp^{2n}$, $a^{\perp}$ contains either 
\begin{align}
 \label{eq:bentcond1}
&N\ \mbox{subspaces}\ T\subseteq f^{-1}(j)\ \mbox{for every}\ j\in\fp^*\ \mbox{or}\\
 \label{eq:bentcond2}
&\begin{cases}
N+p^{n-m}\ \mbox{subspaces}\ T\subseteq f^{-1}(t)~\mbox{for some}\ t\in \fp^*\ \mbox{and}\\
N\ \mbox{subspaces}\ T\subseteq f^{-1}(j)\ \mbox{for remaining}\ j\in\fp^*\setminus\{t\}\enspace,
\end{cases}
\end{align}
where $N=\frac{p^{n-1}(p^{n-m}-1)}{p^m-1}$. Also, $\cW_f(0)=p^n$ and if \eqref{eq:bentcond1} holds then $\cW_f(a)=p^n$ and if 
\eqref{eq:bentcond2} holds then $\cW_f(a)=p^n\zeta_p^t$. If $f$ is bent then it is a regular bent functions. 
\end{theorem}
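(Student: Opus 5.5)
We would split $\fp^{2n}$ by the value classes of $f$ and compute $\cW_f$ in closed form. For $j\in\fp^*$ let $\cS_j=\{T\in\cS: f\equiv j \mbox{ on } T^*\}$, $s_j=|\cS_j|$, and for $a\neq 0$ let $r_j(a)$ be the number of $T\in\cS_j$ with $T\subseteq a^{\perp}$. Write $\zeta_p^{f(x)}=1+\sum_{T\in\cS}(\zeta_p^{c_T}-1)\mathbf 1_{T^*}(x)$, where $c_T$ is the value of $f$ on $T^*$. By Lemma~\ref{le:sspsum}, $\sum_{x\in T^*}\zeta_p^{-\langle a,x\rangle}$ equals $p^m-1$ if $T\subseteq a^\perp$ and $-1$ otherwise. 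This gives
\[\cW_f(0)=p^{2n}+(p^m-1)\sum_{j\in\fp^*}s_j(\zeta_p^j-1),\qquad \cW_f(a)=\sum_{j\in\fp^*}(\zeta_p^j-1)\bigl(p^mr_j(a)-s_j\bigr)\ (a\neq0).\]
The whole proof then reduces to arithmetic on these two formulas.

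\emph{Sufficiency and the values.} Assume $m\mid n$, $s_j=s:=\frac{p^{n-1}(p^n-1)}{p^m-1}$ for all $j\in\fp^*$, and \eqref{eq:bentcond1} or \eqref{eq:bentcond2}. Then $\cW_f(0)=p^{2n}+p^{n-1}(p^n-1)\bigl(\sum_{j\neq0}\zeta_p^j-(p-1)\bigr)=p^{2n}-p^{n-1}(p^n-1)p=p^n$. For $a\neq0$ one checks $p^mN-s=-p^{n-1}$. Hence in case \eqref{eq:bentcond1} every coefficient equals $c_j=-p^{n-1}$, and using $\sum_{j\neq0}\zeta_p^j=-1$ we get $\cW_f(a)=p^{n-1}+(p-1)p^{n-1}=p^n$. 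In case \eqref{eq:bentcond2} the extra $p^{n-m}$ subspaces contribute $p^n(\zeta_p^t-1)$, giving $\cW_f(a)=p^n\zeta_p^t$. So $f$ is bent, and since $\cW_f(a)\in p^n\zeta_p^{\fp}$ for all $a$, it is regular with the stated values.

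\emph{Necessity.} Assume $f$ is bent. Since $|f^{-1}(j)|=s_j(p^m-1)$ for $j\neq0$, Lemma~\ref{le:Ny} restricts these sizes to $p^{2n-1}\pm(p-1)p^{n-1}$ and $p^{2n-1}\mp p^{n-1}$. Because $p^n>3$, at least two nonzero $j$ lie outside the exceptional class $k$. Their common size $p^{n-1}(p^n\mp1)$ must then be divisible by $p^m-1$, which is coprime to $p$.

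The main obstacle is to show that we must be in the upper-sign case with $k=0$. In that case $p^m-1\mid p^n-1$, so $m\mid n$ and $s_j=s$ for all $j\neq0$. To exclude the other cases I would first use divisibility: write $n=qm+r$ with $0\le r<m$, so $p^n\equiv p^r\pmod{p^m-1}$, and $p^m-1\mid p^r+1$ fails except in the tiny case $p=3$, $m=1$. For the remaining possibilities (the exceptional class nonzero, or $p=3$, $m=1$ with lower signs) I would compare the resulting value of $\cW_f(0)=|D_0|+\sum_{j\neq0}|D_j|\zeta_p^j$, expanded in the basis $\zeta_p,\dots,\zeta_p^{p-1}$, with the requirement $|\cW_f(0)|=p^n$. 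I would combine this with the constraint $|f^{-1}(0)|\ge 1+p^{2n}-1-|\cS|(p^m-1)$ and with $m<2n$ to reach a contradiction.

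Once $s_j=s$ is known, set $c_j(a)=p^mr_j(a)-s$, so that $\cW_f(a)=\sum_{j\neq0}c_j(\zeta_p^j-1)$. Each $c_j\equiv -s\equiv -p^{n-1}\pmod{p^m}$. I would then use Parseval, $\sum_a|\cW_f(a)|^2=p^{4n}$, together with the exact first and second moments of $r_j(a)$ over $a\neq0$. These moments are computable from the fact that each $T$ lies in $\frac{p^{2n-m}-1}{p-1}$ hyperplanes and two distinct spread elements span a $2m$-dimensional space. The aim is a sum-of-squares identity showing that every $c_j(a)$ equals $-p^{n-1}$, except possibly one equal to $p^n-p^{n-1}$. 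In terms of $r_j(a)$, this is exactly condition \eqref{eq:bentcond1} or \eqref{eq:bentcond2}.
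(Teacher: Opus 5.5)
Your sufficiency computation and your closed forms for $\cW_f(0)$ and $\cW_f(a)$ are correct and agree with the paper. For $k=0$ no Walsh-value argument is needed: $|D_k|=p^{2n-1}\pm(p-1)p^{n-1}\equiv 1\pmod{p-1}$, whereas $p-1\mid p^m-1\mid |D_j|$ for every $j\neq 0$. This is how the paper does it.

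The necessity direction, however, has two genuine gaps.

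First, the case $p=3$, $m=1$ with lower signs cannot be ruled out, because it actually occurs. Take $f(x)=x_1^2+x_2^2+x_3^2-x_4^2$ on $\F_3^4$ (so $n=2$), with $\cS$ the set of lines $\langle x\rangle$ such that $f(x)\neq 0$. All hypotheses hold and $f$ is bent. Yet $|f^{-1}(1)|=|f^{-1}(2)|=30$, $|f^{-1}(0)|=21$, and $\cW_f(0)=(i\sqrt3)^3(-i\sqrt3)=-9$. So neither $|\cW_f(0)|=p^n$ nor your counting bound gives a contradiction, and the statement itself needs $p^m>3$. You located the exception correctly. The paper's proof misses it: its claim that $p^m-1\nmid p^n+1$ ``unless $p^n=3$'' should read ``unless $p^m=3$''.

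Second, the Parseval step cannot yield \eqref{eq:bentcond1}/\eqref{eq:bentcond2}, because these are pointwise in $a$. Parseval holds for every $f$. The first and second moments of $r_j(a)$ are fixed by the $s_j$ and the spread geometry alone. So anything you obtain by summing over $a$ holds whether or not $f$ is bent. Indeed, these moments are equally compatible with some $a$ carrying the negated shapes: $c_j(a)=p^{n-1}$ for all $j$, or $c_t(a)=p^{n-1}-p^n$ with $c_j(a)=p^{n-1}$ for $j\neq t$.

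What is needed is the paper's pointwise argument. For fixed $a$, bentness in even dimension gives $\cW_f(a)=\epsilon p^n\zeta_p^t$ with $\epsilon=\pm1$. Since $\zeta_p,\dots,\zeta_p^{p-1}$ are linearly independent over $\mathbb{Q}$, the identity $\sum_{j\neq0}c_j(\zeta_p^j-1)=\epsilon p^n\zeta_p^t$ forces $c_j=-\epsilon p^{n-1}$ for $j\neq t$, and, if $t\neq0$, $c_t=\epsilon(p^n-p^{n-1})$. For $\epsilon=1$ these are exactly \eqref{eq:bentcond1} and \eqref{eq:bentcond2}, which are the paper's Cases 1 and 2.

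The case $\epsilon=-1$ must still be excluded. Your congruence $c_j\equiv-p^{n-1}\pmod{p^m}$ does this only when $m=n$. For $m<n$ we have $p^{n-1}\equiv-p^{n-1}\pmod{p^m}$, so the negated shapes pass the congruence. For $m<n$ the paper simply declares $f$ regular once $\cW_f(0)=p^n$ is known. Lemma~\ref{le:Ny} does not justify that converse, so this step needs an additional argument in either approach.
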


\begin{proof}
For every $j\in\fp$, let $D_j:=f^{-1}(j)$. For $j\neq 0$ let also $N_j\geq 0$ be such that $D_j=\bigcup_{i=1}^{N_j}\cS_{ji}^*$, where 
$\cS_{ji}$ are pairwise distinct elements of $\cS$ (if $N_j=0$ then $D_j=\emptyset$). Therefore, 
\begin{align*}
\lvert D_j\rvert&=N_j(p^m-1)\ \mbox{for}\ j\neq 0\\
\lvert D_0\rvert&=p^{2n}-(p^m-1)\sum_{j=1}^{p-1}N_j\enspace. 
\end{align*}
Note that $p-1$ divides $\lvert D_j\rvert$ when $j\neq 0$ and does not divide $\lvert D_0\rvert$.

First, assume that $f$ is bent. Considering Lemma~\ref{le:Ny}, and the mentioned above divisibility, we conclude that $k=0$. By the same 
lemma, $\lvert D_j\rvert=p^{n-1}(p^n\mp 1)$ for $j\neq 0$. For such $j$, $\lvert D_j\rvert$ is divisible by $p^m-1$ and since
$\gcd(p^n+1,p^m-1)\neq p^m-1$ (unless $p^n=3$), it follows that $\lvert D_j\rvert=p^{n-1}(p^n-1)$, $m\mid n$, and $f$ is regular. 

Also $N_j=\frac{p^{n-1}(p^n-1)}{p^m-1}$ and 
\begin{align*}
\cW_f(0)&=\sum_{x\in\fp^{2n}}\zeta_p^{f(x)}=\sum_{j=0}^{p-1}\sum_{x\in D_j}\zeta_p^j=\lvert D_0 \rvert+\sum_{j=1}^{p-1}\lvert D_j\rvert\zeta_p^j\\
&=p^{2n}-(p^m-1)\sum_{j=1}^{p-1}N_j+(p^m-1)\sum_{j=1}^{p-1}N_j\zeta_p^j\\
&=p^{2n}-(p-1)p^{n-1}(p^n-1)-p^{n-1}(p^n-1)=p^n\enspace.
\end{align*}

Now we show that \eqref{eq:bentcond1}~or~\eqref{eq:bentcond2} holds. Let $a\in\fp^{2n}$ with $a\neq 0$. Assume that $a^{\perp}$ contains 
$s$ distinct subspaces from $\cS$, say $\cS_{j_1 i_1},\ldots ,\cS_{j_s i_s}$, where $0<j_1,\ldots,j_s<p$ and $1\leq 
i_1,\ldots,i_s\leq\frac{p^{n-1}(p^n-1)}{p^m-1}$. Consider 
\begin{align*}
\cW_f(a)&=\sum_{x\in\fp^{2n}}\zeta_p^{f(x)-\langle a,x\rangle}\\
&=\sum_{x\in D_0}\zeta_p^{-\langle a,x\rangle}+\sum_{j=1}^{p-1}\sum_{x \in D_j}\zeta_p^{f(x)-\langle a,x \rangle}\\
&=-\sum_{j=1}^{p-1}\sum_{x\in D_j}\zeta_p^{-\langle a,x \rangle}+\sum_{j=1}^{p-1}\zeta_p^j\sum_{x\in D_j}\zeta_p^{-\langle a, x\rangle}\\
&=-\sum_{j=1}^{p-1}\sum_{i=1}^{\frac{p^{n-1}(p^n-1)}{p^m-1}}\sum_{x\in\cS_{ji}^*}\zeta_p^{-\langle a,x\rangle}+\sum_{j=1}^{p-1}\zeta_p^j \sum_{i=1}^{\frac{p^{n-1}(p^n-1)}{p^m-1}}\sum_{x\in\cS_{ji}^*}\zeta_p^{-\langle a,x\rangle}\\
&=(p-1)\frac{p^{n-1}(p^n-1)}{p^m-1}-sp^m +\frac{p^{n-1}(p^n-1)}{p^m-1}+p^m(\zeta_p^{j_1}+\cdots+\zeta_p^{j_s})\\
&=\frac{p^{n}(p^n-1)}{p^m-1}+p^m(\zeta_p^{j_1}+\cdots+\zeta_p^{j_s}-s)\enspace.
\end{align*}
Hear we used Lemma~\ref{le:sspsum} consistently. Since $f$ is a regular bent function, for some $t\in\fp$, we have
\begin{align}
 \label{eq:T1wta}
\nonumber&\frac{p^{n}(p^n-1)}{p^m-1}+p^m(\zeta_p^{j_1}+\cdots+\zeta_p^{j_s}-s)=p^n\zeta_p^t\\
\nonumber\implies&\frac{p^{n-m}(p^n-1)}{p^m-1}+\zeta_p^{j_1}+\cdots+\zeta_p^{j_s}-s=p^{n-m}\zeta_p^t\\
\implies&\zeta_p^{j_1}+\cdots+\zeta_p^{j_s}-p^{n-m}\zeta_p^t=s-\frac{p^{n-m}(p^n-1)}{p^m-1}\enspace.
\end{align}
Now we consider two cases separately, namely when $t=0$ and when $t\in\fp^*$.

\textbf{Case 1.} Let $t=0$. In this case, \eqref{eq:T1wta} reduces to 
\begin{equation}
 \label{eq:T1C1}
\zeta_p^{j_1}+\cdots +\zeta_p^{j_s}=s+p^{n-m}-\frac{p^{n-m}(p^n-1)}{p^m-1}\enspace.
\end{equation}
The LHS of \eqref{eq:T1C1} is an integer if and only if it is of the form $\ell(\zeta_p+\cdots+\zeta_p^{p-1})=-\ell$ for some positive 
integer $\ell$. This implies $s=\ell(p-1)$. Inserting this value in \eqref{eq:T1C1} we obtain 
\[-\ell=\ell(p-1)+p^{n-m}-\frac{p^{n-m}(p^n-1)}{p^m-1}\implies \ell=\frac{p^{n-1}(p^{n-m}-1)}{p^m-1}=N\enspace.\]
We conclude that $a^{\perp}$ contains $N$ distinct $\cS_{ji}^*\subseteq D_j$ for every $j\in\fp^*$. 

\textbf{Case 2.} Let $t\neq 0$. In this case, \eqref{eq:T1wta} reduces to 
\begin{equation}
 \label{eq:T1C2}
\zeta_p^{j_1}+\cdots+\zeta_p^{j_s}-p^{n-m}\zeta_p^t=s-\frac{p^{n-m}(p^n-1)}{p^m-1}\enspace.
\end{equation}
Again, the LHS of \eqref{eq:T1C2} is an integer if and only if it is of the form $\ell(\zeta_p+\cdots+\zeta_p^{p-1})=-\ell$ for some 
positive integer $\ell$. This implies that 
\[\zeta_p^{j_1}+\cdots +\zeta_p^{j_s}=\ell(\zeta_p+\cdots +\zeta_p^{p-1})+p^{n-m}\zeta_p^t\]
hence $s=\ell(p-1)+p^{n-m}$. Inserting this value in \eqref{eq:T1C2} we have
\[-\ell=\ell(p-1)+p^{n-m}-\frac{p^{n-m}(p^n-1)}{p^m-1}\iff \ell=\frac{p^{n-1}(p^{n-m}-1)}{p^m-1}=N\enspace.\]
We conclude that $a^{\perp}$ contains $N$ distinct $\cS_{ji}^*\subseteq D_j$ for each $j\in\fp^*\setminus\{t\}$ and $a^{\perp}$ contains 
$N+p^{n-m}$ distinct $\cS_{ti}^*\subseteq D_t$. 

Now assume that $m\mid n$ and $f$ is such that for every $j\in\fp^*$, 
\[D_j=\bigcup_{i=1}^{\frac{p^{n-1}(p^n-1)}{p^m-1}}\cS^*_{ji}\enspace,\]
where $\cS_{ji}$ are pairwise distinct elements of $\cS$. It is easy to see that, under these assumptions, we have 
\begin{align*}
\cW_f(0)&=\sum_{x\in\fp^{2n}}\zeta_p^{f(x)}=\sum_{j=0}^{p-1}\sum_{x\in D_j}\zeta_p^j=\lvert D_0 \rvert+\sum_{j=1}^{p-1}\lvert D_j\rvert\zeta_p^j\\
&=p^{2n}-(p-1)p^{n-1}(p^n-1)+p^{n-1}(p^n-1)\sum_{j=1}^{p-1}\zeta_p^j\\
&=p^n\enspace.
\end{align*}

In addition, assume that the for every nonzero $a\in\fp^{2n}$ condition~\eqref{eq:bentcond1}~or~\eqref{eq:bentcond2} holds. It is easy to 
show that if $a^{\perp}$ contains $N$ subspaces $\cS^*_{ji}\in D_j$ for every $j\in\fp^*$ (condition~\eqref{eq:bentcond1}), then 
$\cW_f(a)=p^n$. Similarly, if $a^{\perp}$ contains $N$ subspaces $\cS^*_{ji}\in D_j$ for all $j\in\fp^*\setminus\{t\}$ and contain 
$N+p^{n-m}$ subspaces $\cS^*_{ti}\in D_t$, then $\cW_f(a)=\zeta_p^t p^n$. Thus, $f$ is regular bent.\qed 
\end{proof}

The following corollary is \cite[Theorem~3.3]{LiLu14} and for odd $p$, is a special case of Theorem~\ref{th:PSminus} with $m=n$. The case 
when $p=2$ is proven in \cite[Ch.~6]{Di74}. 

\begin{corollary}
Let $p$ be a prime and $n$ be a positive integer such that $p^n>3$. Assume $\cS$ is a partial $n$-spread of $\fp^{2n}$ and 
$f:\fp^{2n}\mapsto\fp$ is such that for each $T\in\cS$, $f$ is constant on $T^*$. Moreover, suppose that 
$f^{-1}(0)=\fp^{2n}\setminus\bigcup_{T\in\cS}T^*$. Then $f$ is bent if and only if $f^{-1}(j)\cup\{0\}$ is a union of exactly $p^{n-1}$ 
elements of $\cS$ for every $j\in\fp^*$. Moreover, if $f$ is bent then for every nonzero $a\in\fp^{2n}$, $a^{\perp}$ contains either 
\begin{align}
 \label{eq:bentcond3}
&\mbox{none of the subspaces from}\ \cS\ \mbox{or}\\
 \label{eq:bentcond4}
&\mbox{only one subspace}\ T\in\cS\ \mbox{with}\ T\subseteq f^{-1}(t)\ \mbox{for some}\ t\in\fp^*
\end{align}
and these are the only cases possible. Also, $\cW_f(0)=p^n$ and if \eqref{eq:bentcond3} holds then $\cW_f(a)=p^n$ and if 
\eqref{eq:bentcond4} holds then $\cW_f(a)=p^n\zeta_p^t$. If $f$ is bent then it is a regular bent functions. 
\end{corollary}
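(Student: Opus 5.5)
For odd $p$ I would obtain this corollary by specializing Theorem~\ref{th:PSminus} to $m=n$. The hypotheses fit: $m=n<2n$, $m\mid n$ trivially, and $p^n>3$ is assumed. With $m=n$ the count $\frac{p^{n-1}(p^n-1)}{p^m-1}$ becomes $p^{n-1}$. So the theorem says that bentness requires each $f^{-1}(j)\cup\{0\}$, $j\in\fp^*$, to be a union of exactly $p^{n-1}$ elements of $\cS$. For $m=n$ we also get $N=\frac{p^{n-1}(p^{0}-1)}{p^n-1}=0$ and $p^{n-m}=1$. Hence condition~\eqref{eq:bentcond1} reads ``$a^\perp$ contains no subspace of $\cS$'', which is \eqref{eq:bentcond3}. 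Condition~\eqref{eq:bentcond2} reads ``$a^\perp$ contains exactly one subspace of $\cS$, lying in $f^{-1}(t)$'', which is \eqref{eq:bentcond4}. The values $\cW_f(0)=p^n$, $\cW_f(a)=p^n$ or $p^n\zeta_p^t$, and regularity then follow directly.

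The one difference from the theorem is that here the converse has no hypothesis on $a^\perp$. So I must show that the counting condition alone forces, for every $a\neq0$, one of \eqref{eq:bentcond3} or \eqref{eq:bentcond4}. This is the main step. I would argue by dimension. Each $T\in\cS$ is $n$-dimensional, $a^\perp$ is $(2n-1)$-dimensional, and two distinct members of $\cS$ intersect only in $0$, so together they span a $2n$-dimensional space. Hence $a^\perp$ contains at most one member of $\cS$.

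Next, I would compute $\cW_f(a)$ as in the proof of Theorem~\ref{th:PSminus}. If $s\in\{0,1\}$ members of $\cS$ lie in $a^\perp$, Lemma~\ref{le:sspsum} gives
\[\cW_f(a)=\frac{p^n(p^n-1)}{p^n-1}+p^n(\zeta_p^{j_1}-1)=p^n\zeta_p^{j_1}\]
when $s=1$, and $\cW_f(a)=p^n$ when $s=0$. Together with $\cW_f(0)=p^n$, these values have modulus $p^{n/2}\cdot p^{n/2}$, so $f$ is regular bent. This proves sufficiency without any extra hypothesis.

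Necessity comes straight from the theorem. The $p=2$ case is the classical result of Dillon \cite[Ch.~6]{Di74}, and I would cite it rather than reprove it, since Lemma~\ref{le:Ny} and the surrounding arguments are stated for odd $p$.
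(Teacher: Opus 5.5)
Your proposal is correct and follows the paper's own route: the paper obtains this corollary simply as the case $m=n$ of Theorem~\ref{th:PSminus}, where the count becomes $p^{n-1}$, $N=0$ and $p^{n-m}=1$, and it cites Dillon for $p=2$. Your dimension argument is a useful addition: a hyperplane $a^{\perp}$ can contain at most one member of a partial $n$-spread of $\fp^{2n}$, which shows why the theorem's hyperplane conditions can be dropped from the converse, a point the paper leaves implicit.
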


\begin{definition}
The class of (regular) bent functions satisfying conditions of Theorem~\ref{th:PSminus} is called $p$-ary $m$-$\cPS^{-}$ class.
\end{definition}

The condition for $p^n>3$ is relevant since with $p=3$ and $n=1$ Theorem~\ref{th:PSminus} would claim that all bent functions such that 
$f(x)=f(-x)$ (in $1$-$\cPS^{-}$ class since $m=1$) are regular that is wrong. As noted in \cite[Corollary~3.5]{LiLu14}, the dual of an 
$n$-$\cPS^{-}$ function is also $n$-$\cPS^{-}$. It follows since the dual subspaces of $T\in\cS$ have the same dimension $n$ and also form 
a partial $n$-spread. The general case of $m$-$\cPS^{-}$ functions is different. For any $\cS_{jl}\in\cS$, denote its dual 
($(2n-m)$-dimensional subspace) as $\cS_{jl}^{\perp}$. By Theorem~\ref{th:PSminus}, 
\[(f^*)^{-1}(0)=\bigcup_{j=1}^{p-1}\bigcup_{i=1}^N\cS_{jl_i}^{\perp}\enspace,\]
where $1\leq \ell_1<\dots \ell_N\leq M$ and $M=\frac{p^{n-1}(p^n-1)}{p^m-1}$. Also, for $t\neq 0$, 
\[(f^*)^{-1}(t)=\Bigg(\bigcup_{i=1}^{N+p^{n-m}}\cS_{th_i}^{\perp}\Bigg)\bigcup\Bigg(\bigcup_{\substack{j=1\\j\neq t}}^{p-1}\bigcup_{i=1}^N\cS_{jl_i}^{\perp}\Bigg)\enspace,\]
where $1\leq h_1<\dots h_{N+p^{n-m}}\leq M$ and $1\leq \ell_1<\dots \ell_N\leq M$. In this case, we can have $2n-m>n$ and answer to the 
question which class the dual belongs to, is open. 

\begin{example}[\cite{HeKhSp26}]
 \label{ex:MMF1}
Let $n=2k$. Take any $a_1$ being a nonsquare in $\F_{3^{2n}}$ and define 
\[a_2=\pm I^k a_1^{(3^k+1)/2}\left((-1)^k a_1^{(3^k-1)(3^{2k}+1)/4}+a_1^{-(3^k-1)(3^{2k}+1)/4}\right)\enspace,\]
where $I$ is a primitive $4$th root of unity in $\F_{3^{2n}}$ and the sign is arbitrary. Ternary vectorial function 
$F:\F_{3^{2n}}\mapsto\fthreek$ given by 
\[F(x)=\Tr^{2n}_k\left(a_1 x^{2(3^k+1)}+a_2 x^{(3^k+1)^2}\right)\]
is a bent function of algebraic degree four. Moreover, component functions of $F$ are all regular bent functions from the completed class 
$\cM$. 

Now assume $k$ is odd. Then $F$ is constant on the elements of the $2$-spread of $\F_{3^{2n}}$ that is a set of pairwise disjoint (except 
for $0$) $2$-dimensional subspaces of $\F_{3^{2n}}$ defined by $\xi^i\F_{3^2}$ for $i=0,\dots,\frac{3^{2n}-1}{8}-1$. Therefore, component 
functions of $F$ are all bent functions also from the $2$-$\cPS^{-}$ class. Moreover, for $n>2$, the algebraic degree of $f$ (that is equal 
four) is not $2n$ that is the maximum for a regular ternary bent function and, by Theorem~\ref{th:deg}, $f$ is not in the $\cPS^{-}$ class 
(i.e., it is not a $n$-$\cPS^{-}$ function). 
\end{example}

\begin{example}[\cite{HeKhSp26}]
Take $n>1$ not divisible by six. Ternary vectorial function $F:\F_{3^{2n}}\mapsto\fthreen$ given by 
\[F(x)=\Tr^{2n}_n\big(a(x^{2(3^{k+1}+1)}+x^{3^{k+1}+5}+x^8)\big)\enspace,\]
where $a=\xi^{(3^n+1)/2}$ and $\xi$ is a primitive element of $\F_{3^{2n}}$, is a bent function of algebraic degree four. Moreover, 
component functions of $F$ are all regular bent functions from the completed class $\cM$. 

Now assume $n$ is even. Using the same argument as in Example~\ref{ex:MMF1} we conclude that component functions of $F$ are all bent 
functions also from the $2$-$\cPS^{-}$ but not from the $\cPS^{-}$ class. 
\end{example}

\begin{example} 
Take $\F_{3^8}$ and $\xi$ let be its primitive element. Then $f(x)=\Tr_8(\eta x^{88}+\eta^3 x^{136})$, where $\eta=\xi^{(3^8-1)/8}$, is a 
$2$-$\cPS^{-}$ a ternary bent function of algebraic degree four. 

Take $\F_{5^4}$ and $\xi$ let be its primitive element. Then $f(x)=\Tr_4(\eta^{14}x^{28}\pm x^{36})$, where $\eta=\xi^{(5^2+1)/2}$, is a 
$1$-$\cPS^{-}$ a quinary bent function of algebraic degree four, so it is not from the $\cPS^{-}$ class. 
\end{example}

\section{$p$-ary $m$-$\cPS^{+}$ class}
 \label{sec:PSplus}
The following result is a generalization of \cite[Theorem 3.6]{LiLu14} that suggests a method for constructing bent functions 
$f:\fp^{2n}\mapsto\fp$ using partial $m$-spreads of $\fp^{2n}$ with $m$ dividing $n$. 

\begin{theorem}
 \label{th:PSplus}
Let $p$ be an odd prime and $m,n$ be positive integers such that $m<2n$ and $p^n>3$. Assume $\cS$ is a partial $m$-spread of $\fp^{2n}$ and 
$f:\fp^{2n}\mapsto\fp$ is such that for each $T\in\cS$, $f$ is constant on $T^*$. Moreover, suppose that 
$f^{-1}(0)=\fp^{2n}\setminus\bigcup_{T\in\cS}T$ and $f(0)=z\neq 0$. Then $f$ is bent if and only if $m\mid n$ and $f^{-1}(j)\cup\{0\}$ is a 
union of exactly $\frac{p^{n-1}(p^n-1)}{p^m-1}$ elements of $\cS$ for every $j\in\fp^*\setminus\{z\}$, $f^{-1}(z)$ is a union of exactly 
$\frac{(p^{n-1}+1)(p^n-1)}{p^m-1}$ elements of $\cS$, and for every nonzero $a\in\fp^{2n}$, $a^{\perp}$ contains either 
\begin{align}
 \label{eq:bentcond5}
&\begin{cases}
N+\frac{p^{n-m}-1}{p^m-1}\ \mbox{subspaces}\ T\subseteq f^{-1}(z)\ \mbox{and}\\
N\ \mbox{subspaces}\ T\subseteq f^{-1}(j)\ \mbox{for remaining}\ j\in\fp^*\setminus\{z\}\\
\end{cases} 
\\\nonumber\mbox{or}\\
 \label{eq:bentcond6}
&\begin{cases}
N+\frac{p^{n-m}-1}{p^m-1}\ \mbox{subspaces}\ T\subseteq f^{-1}(z)\ \mbox{and}\\
N+p^{n-m}\ \mbox{subspaces}\ T\subseteq f^{-1}(k)\ \mbox{for some}\ t\in\fp^*\ \mbox{and}\\
N\ \mbox{subspaces}\ T\subseteq f^{-1}(j)\ \mbox{for remaining}\ j\in\fp^*\backslash\{z,t\}\enspace,\\
\end{cases}
\end{align}
where $N=\frac{p^{n-1}(p^{n-m}-1)}{p^m-1}$. Also, $\cW_f(0)=p^n\zeta_p^z$ and if \eqref{eq:bentcond5} holds then $\cW_f(a)=p^n$ and if 
\eqref{eq:bentcond6} holds then $\cW_f(a)=p^n\zeta_p^t$. If $f$ is bent then it is a regular bent functions. 
\end{theorem}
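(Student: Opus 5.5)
I would mirror the proof of Theorem~\ref{th:PSminus} step by step, keeping track of the single extra point $0$, which now lies in $D_z$. Write $D_j:=f^{-1}(j)$. For $j\in\fp^*\setminus\{z\}$ let $D_j=\bigcup_{i=1}^{N_j}\cS_{ji}^*$, and let $D_z=\{0\}\cup\bigcup_{i=1}^{N_z}\cS_{zi}^*$ with pairwise distinct $\cS_{ji}\in\cS$. Then $\lvert D_j\rvert=N_j(p^m-1)$ for $j\neq0,z$, $\lvert D_z\rvert=N_z(p^m-1)+1$, and $\lvert D_0\rvert=p^{2n}-1-(p^m-1)\sum_{j\neq0}N_j$.

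\textbf{Forward direction.} Assume $f$ is bent. Lemma~\ref{le:Ny} gives values $p^{2n-1}\pm(p-1)p^{n-1}$ for one level $k$ and $p^{2n-1}\mp p^{n-1}$ for all other levels. If $p>3$, then $p-1$ divides the sizes of the $p-2\ge2$ levels $D_j$ with $j\neq0,z$. Since $p^{2n-1}\mp p^{n-1}\equiv 0\pmod{p-1}$ while $p^{2n-1}\pm(p-1)p^{n-1}\equiv1\pmod{p-1}$, and likewise $\lvert D_z\rvert\equiv1$ and $\lvert D_0\rvert\equiv 0$, the exceptional level must be $k=z$. The case $p=3$ needs a separate small check (there is only one $j\neq0,z$), using $D_0$ as well.

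Given $k=z$, every $j\notin\{0,z\}$ satisfies $\lvert D_j\rvert=p^{n-1}(p^n\mp1)$, which must be divisible by $p^m-1$. The same $\gcd$ argument as before, valid since $p^n>3$, forces the minus sign, $m\mid n$, and regularity. It then follows that $N_j=\frac{p^{n-1}(p^n-1)}{p^m-1}$, and $\lvert D_z\rvert=p^{2n-1}+(p-1)p^{n-1}$ gives $N_z=\frac{(p^{n-1}+1)(p^n-1)}{p^m-1}$. A direct count then gives $\cW_f(0)=p^n\zeta_p^z$.

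\textbf{Walsh coefficients at $a\neq0$.} Using $\sum_{x\in D_0}=-\sum_{x\notin D_0}$ and Lemma~\ref{le:sspsum}, each subspace $T\in\cS$ with $T\subseteq a^\perp$ contributes $p^m-1$ times its phase. Each $T\not\subseteq a^\perp$ contributes $-1$ times its phase. The point $0$ contributes $\zeta_p^z-1$. If $s_j$ is the number of subspaces in level $j$ lying in $a^\perp$, this gives
\[
\cW_f(a)=\sum_{j\neq0}N_j(1-\zeta_p^j)+p^m\sum_{j\neq0}s_j(\zeta_p^j-1)+(\zeta_p^z-1).
\]
Regularity gives $\cW_f(a)=p^n\zeta_p^t$. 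Substituting the values of $N_j$ yields a relation $\sum_j s_j\zeta_p^j-p^{n-m}\zeta_p^t=\text{integer}$. Since $1,\zeta_p,\dots,\zeta_p^{p-2}$ is a basis and $\sum_{j=1}^{p-1}\zeta_p^j=-1$, the coefficient vector of $\zeta_p^j$, namely $(s_j-\delta_{j,t}p^{n-m})_{j=1}^{p-1}$, must be constant, equal to some $\ell$.

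We then solve for $\ell$ and for the shift at $j=z$. The extra $N_z-N_j=\frac{p^n-1}{p^m-1}$ subspaces in level $z$ perturb the constant term. I expect this to produce $s_z=\ell+\frac{p^{n-m}-1}{p^m-1}$, so the coefficient vector is constant only after accounting for this shift, and $\ell=N$.

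\textbf{Main obstacle.} The delicate step is this bookkeeping: separating the $j=z$ coordinate and checking that the resulting integer equation forces exactly $N$ and $N+\frac{p^{n-m}-1}{p^m-1}$. One must also determine how $t=0$, $t=z$ and $t\neq0,z$ distribute between \eqref{eq:bentcond5} and \eqref{eq:bentcond6}. Here the case $t=z$ should be absorbed into \eqref{eq:bentcond6} with count $N+\frac{p^{n-m}-1}{p^m-1}+p^{n-m}$ in level $z$, and I would verify this carefully.

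\textbf{Converse.} Under the stated counts, substitute the values back into the displayed formula for $\cW_f(a)$. Each case then gives $p^n\zeta_p^t$, or $p^n$ in case \eqref{eq:bentcond5}, so $f$ is regular bent.
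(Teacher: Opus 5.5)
Your outline follows the paper's proof almost line by line, so it inherits that proof's two weak points.

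\textbf{The $\gcd$ step fails for $p=3$, $m=1$.} To exclude the lower sign in Lemma~\ref{le:Ny} you need $p^m-1\nmid p^{n-1}(p^n+1)$. However, $p^m-1$ divides $p^n+1$ precisely when $p^m=3$; for instance, when $m\mid n$ one has $p^n+1\equiv 2\pmod{p^m-1}$. This is the case $p=3$, $m=1$, for \emph{every} $n$, and the hypothesis $p^n>3$ does not rule it out.

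In that case the statement is false. On $\F_{3^4}$ (so $p=3$, $n=2$, $m=1$, $p^n=9>3$) take $f(x)=\Tr_4(x^2)+z$, and let $\cS$ be the set of lines $\F_3x$ on which $f\neq 0$.
\begin{itemize}
\item All hypotheses hold, since $f(-x)=f(x)$, and $f$ is bent.
\item The quadratic Gauss sum gives $\cW_f(0)=\zeta_3^z\sum_{x}\zeta_3^{\Tr_4(x^2)}=-9\zeta_3^z$, so $f$ is not regular and $\cW_f(0)\neq p^n\zeta_p^z$.
\item The level counts also fail: $f^{-1}(-z)\cup\{0\}$ is a union of $15$ lines instead of $12$, and $f^{-1}(z)$ is a union of $10$ lines instead of $16$.
\end{itemize}
So the hypothesis must be $p^m>3$, and no proof of the statement as given can succeed. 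Your phrase ``valid since $p^n>3$'' reproduces the paper's slip ``unless $p^n=3$''.

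\textbf{Regularity at $a\neq 0$ is never established.} Even when $p^m>3$, the upper sign in Lemma~\ref{le:Ny} yields only $\cW_f(0)=+p^n\zeta_p^z$. The sign of $\cW_f(a)$ for $a\neq0$ would require applying the lemma to $f(x)-\langle a,x\rangle$, which has no spread structure. Yet your bookkeeping starts from ``regularity gives $\cW_f(a)=p^n\zeta_p^t$''.

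Run your own computation with $\cW_f(a)=-p^n\zeta_p^t$ instead. For $t=0$ and $t\neq 0$ alike you get $p\ell=p^{n-m}\bigl(\frac{p^n-1}{p^m-1}+1\bigr)$.
\begin{itemize}
\item For $m=n$ this reads $p\ell=2$, which is impossible, so the Lison\v{e}k--Lu case is fine.
\item For $m<n$ with $m\mid n$ it has the integral solution $\ell=p^{n-m-1}\bigl(\frac{p^n-1}{p^m-1}+1\bigr)$, and all resulting counts $s_j$ are nonnegative. The local arithmetic therefore does not exclude a negative sign.
\end{itemize}
Some further global argument is needed before you may conclude \eqref{eq:bentcond5} or \eqref{eq:bentcond6} in the forward direction and claim that $f$ is regular. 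The paper supplies no such argument either; it simply asserts ``since $f$ is a regular bent function''.

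On the positive side, your treatment of $t=z$ is correct and improves on the paper. In that case the count in $f^{-1}(z)$ is $N+\frac{p^{n-m}-1}{p^m-1}+p^{n-m}$. Condition \eqref{eq:bentcond6} as written, and the paper's Case~2 conclusion, misstate this, and the case genuinely occurs (e.g.\ for $m=n$).

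Your identification $k=z$ also works uniformly for all odd $p$, so no separate check at $p=3$ is needed: $\lvert D_z\rvert$ is the only level $\equiv 1\pmod{p-1}$, while non-exceptional levels are $\equiv 0$ or $\equiv 2$. Note that with the lower sign they are $\equiv 2$, not $0$, as you wrote, but this does not affect the conclusion.
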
 

\begin{proof}
For every $j\in\fp$, let $D_j:=f^{-1}(j)$. For $j\neq 0$ let also $N_j\geq 0$ be such that $D_j=\bigcup_{i=1}^{N_j}\cS_{ji}^*$ if $j\neq 
z$, and $D_z=\bigcup_{i=1}^{N_z}\cS_{ji}$, where $\cS_{ji}$ are pairwise distinct elements of $\cS$ (if $N_j=0$ then $D_j=\emptyset$). 
Therefore, 
\begin{align*}
\lvert D_j\rvert&=N_j(p^m-1)\ \mbox{for}\ j\not\in\{0,z\}\\
\lvert D_z\rvert&=N_z(p^m-1)+1\\
\lvert D_0\rvert&=p^{2n}-(p^m-1)\sum_{j=1}^{p-1}N_j-1\enspace. 
\end{align*}
Note that $p-1$ divides $\lvert D_j\rvert$ when $j\neq z$ and does not divide $\lvert D_z\rvert$.

First, assume that $f$ is bent. Considering Lemma~\ref{le:Ny}, and the mentioned above divisibility, we conclude that $k=z$. By the same 
lemma, $\lvert D_j\rvert=p^{n-1}(p^n\mp 1)$ for $j\neq z$ and $\lvert D_z\rvert=p^{2n-1}\pm(p-1)p^{n-1}$. For $j\not\in\{0,z\}$, $\lvert 
D_j\rvert$ is divisible by $p^m-1$ and since $\gcd(p^n+1,p^m-1)\neq p^m-1$ (unless $p^n=3$), it follows that $\lvert D_0\rvert=\lvert 
D_j\rvert=p^{n-1}(p^n-1)$, $m\mid n$, and $f$ is regular. 

Also $N_j=\frac{p^{n-1}(p^n-1)}{p^m-1}$ for $j\neq z$ and $N_z=\frac{(p^{n-1}+1)(p^n-1)}{p^m-1}$ since $\lvert 
D_z\rvert=p^{2n-1}+(p-1)p^{n-1}$. Now calculate
\begin{align*}
\cW_f(0)&=\sum_{x\in\fp^{2n}}\zeta_p^{f(x)}=\sum_{j=0}^{p-1}\sum_{x\in D_j}\zeta_p^j\\
&=\lvert D_0\rvert+\lvert D_z\rvert\zeta_p^z+\sum_{j\in\fp^*\setminus\{z\}}\lvert D_j\rvert\zeta_p^j\\
&=p^{n-1}(p^n-1)+(p^{n-1}(p^n-1)+p^n)\zeta_p^z+ p^{n-1}(p^n-1)\sum_{j\in \fp^* \setminus\{z\}}\zeta_p^j\\
&=p^{n-1}(p^n-1)+p^n\zeta_p^z+p^{n-1}(p^n-1)\sum_{j\in\fp^*}\zeta_p^j\\
&=p^{n-1}(p^n-1)+p^n\zeta_p^z-p^{n-1}(p^n-1)\\
&=p^n\zeta_p^z\enspace.
\end{align*}

Now we show that \eqref{eq:bentcond5}~or~\eqref{eq:bentcond6} holds. Let $a\in\fp^{2n}$ with $a\neq 0$. Assume that $a^{\perp}$ contains 
$s$ distinct subspaces from $\cS$, say $\cS_{j_1 i_1},\ldots ,\cS_{j_s i_s}$, where $0<j_1,\ldots,j_s<p$ and $1\leq 
i_1,\ldots,i_s\leq\frac{(p^{n-1}+1)(p^n-1)}{p^m-1}$. Consider 

\begin{align*}
\cW_f(a)&=\sum_{x\in\fp^{2n}}\zeta_p^{f(x)-\langle a,x \rangle}\\
&=\sum_{x\in D_0}\zeta_p^{-\langle a,x\rangle}+\zeta_p^z\sum_{x\in D_z}\zeta_p^{-\langle a,x\rangle}+\sum_{j\in\fp^*\setminus\{z\}}\zeta_p^j\sum_{x\in D_j}\zeta_p^{-\langle a,x \rangle}\\
&=-\sum_{x\in D_z}\zeta_p^{-\langle a,x\rangle}-\sum_{j\in\fp^*\setminus\{z\}}\sum_{x\in D_j}\zeta_p^{-\langle a,x\rangle}+\zeta_p^z\sum_{x\in D_z}\zeta_p^{-\langle a,x\rangle}+\sum_{j\in\fp^*\setminus\{z\}}\zeta_p^j\sum_{x\in D_j}\zeta_p^{-\langle a,x\rangle}\\
&=(p-1)\frac{p^{n-1}(p^n-1)}{p^m-1}+\frac{p^n-p^m}{p^m-1}-sp^m +\frac{p^{n-1}(p^n-1)}{p^m-1}+p^m(\zeta_p^{j_1}+\cdots+\zeta_p^{j_s})-\left(\frac{p^n-p^m}{p^m-1}\right)\zeta_p^z\\
&=\frac{p^{n}(p^n-1)}{p^m-1}+p^m(\zeta_p^{j_1}+\cdots+\zeta_p^{j_s}-s)+\frac{p^n-p^m}{p^m-1}(1-\zeta_p^z)\enspace.
\end{align*}
Hear we used Lemma~\ref{le:sspsum} consistently. Since $f$ is a regular bent function, for some $t\in\fp$, we have

\begin{align}
 \label{eq:T2wta}
\nonumber&\frac{p^{n}(p^n-1)}{p^m-1}+p^m(\zeta_p^{j_1}+\cdots+\zeta_p^{j_s}-s)+\frac{p^n-p^m}{p^m-1}(1-\zeta_p^z)=p^n\zeta_p^t\\
\nonumber\implies&\frac{p^{n-m}(p^n-1)}{p^m-1}+\zeta_p^{j_1}+\cdots+\zeta_p^{j_s}-s+\frac{p^{n-m}-1}{p^m-1}(1-\zeta_p^z)=p^{n-m}\zeta_p^t\\
\nonumber\implies&\zeta_p^{j_1}+\cdots+\zeta_p^{j_s}-\frac{p^{n-m}-1}{p^m-1}\zeta_p^z-p^{n-m}\zeta_p^t=s-\frac{p^{n-m}(p^n-1)}{p^m-1}-\frac{p^{n-m}-1}{p^m-1}\\
\implies&\zeta_p^{j_1}+\cdots+\zeta_p^{j_s}-\frac{p^{n-m}-1}{p^m-1}\zeta_p^z-p^{n-m}\zeta_p^t=s-\frac{p^{2n-m}-1}{p^m-1}\enspace.
\end{align} 
Now we consider two cases separately, namely when $t=0$ and when $t\in\fp^*$.

\textbf{Case 1.} Let $t=0$. In this case, \eqref{eq:T2wta} reduces to 
\begin{equation}
 \label{eq:T2C1}
\zeta_p^{j_1}+\cdots+\zeta_p^{j_s}-\frac{p^{n-m}-1}{p^m-1}\zeta_p^z=s+p^{n-m}-\frac{p^{2n-m}-1}{p^m-1}\enspace.
\end{equation}
The LHS of \eqref{eq:T2C1} is an integer if and only if it is of the form $\ell(\zeta_p+\cdots+\zeta_p^{p-1})=-\ell$ for some positive 
integer $\ell$. Thus, 
\[\zeta_p^{j_1}+\cdots +\zeta_p^{j_s}=\ell(\zeta_p+\cdots+\zeta_p^{p-1})+ \frac{p^{n-m}-1}{p^m-1}\zeta_p^z\]
hence $s=\ell(p-1)+ \frac{p^{n-m}-1}{p^m-1}$. Inserting this value in \eqref{eq:T2C1} we obtain
\[-\ell=\ell(p-1)+\frac{p^{n-m}-1}{p^m-1}+p^{n-m}-\frac{p^{2n-m}-1}{p^m-1}\implies \ell=\frac{p^{n-1}(p^{n-m}-1)}{p^m-1}=N\enspace.\]
We conclude that $a^{\perp}$ contains $N$ distinct $\cS_{ji}^*\subseteq D_j$ for each $j\in\fp^*\setminus\{z\}$ and also contains 
$N+\frac{p^{n-m}-1}{p^m-1}$ distinct $\cS_{zi}^*\subseteq D_z$. 

\textbf{Case 2.} Let $t\neq 0$. In this case, \eqref{eq:T2wta} reduces to 
\begin{equation}
 \label{eq:T2C2}
\zeta_p^{j_1}+\cdots +\zeta_p^{j_s}-\frac{p^{n-m}-1}{p^m-1}\zeta_p^z-p^{n-m}\zeta_p^t=s-\frac{p^{2n-m}-1}{p^m-1}\enspace.
\end{equation}
Again, the LHS of \eqref{eq:T2C2} is an integer if and only if it is of the form $\ell(\zeta_p+\cdots+\zeta_p^{p-1})=-\ell$ for some 
positive integer $\ell$. Thus, 
\[\zeta_p^{j_1}+\cdots+\zeta_p^{j_s}=\ell(\zeta_p+\cdots+\zeta_p^{p-1})+\frac{p^{n-m}-1}{p^m-1}\zeta_p^z+p^{n-m}\zeta_p^t\]
hence $s=\ell(p-1) + \frac{p^{n-m}-1}{p^m-1} +p^{n-m}$. Inserting this value in \eqref{eq:T2C2} we obtain
\[-\ell=\ell(p-1)+\frac{p^{n-m}-1}{p^m-1}+p^{n-m}-\frac{p^{2n-m}-1}{p^m-1}\implies \ell=\frac{p^{n-1}(p^{n-m}-1)}{p^m-1}=N\enspace.\]
We conclude that $a^{\perp}$ contains $N$ distinct $\cS_{ji}^*\subseteq D_j$ for each $j\in\fp^*\setminus\{z,t\}$, 
$N+\frac{p^{n-m}-1}{p^m-1}$ distinct $\cS_{ti}^*\subseteq D_z$ and $N+p^{n-m}$ distinct $\cS_{ki}^*\subseteq D_k$. 

In the opposite direction, the proof is straightforward.\qed 
\end{proof}

The following corollary is \cite[Theorem~3.6]{LiLu14} and for odd $p$, is a special case of Theorem~\ref{th:PSplus} with $m=n$. The case 
when $p=2$ is proven in \cite[Ch.~6]{Di74}. 

\begin{corollary}
Let $p$ be an odd prime and $n$ be positive integers such that $p^n>3$. Assume $\cS$ is a partial $n$-spread of $\fp^{2n}$ and 
$f:\fp^{2n}\mapsto\fp$ is such that for each $T\in\cS$, $f$ is constant on $T^*$. Moreover, suppose that 
$f^{-1}(0)=\fp^{2n}\setminus\bigcup_{T\in\cS}T$ and $f(0)=z\neq 0$. Then $f$ is bent if and only if $f^{-1}(j)\cup\{0\}$ is a union of 
exactly $p^{n-1}$ elements of $\cS$ for every $j\in\fp^*\setminus\{z\}$ and $f^{-1}(z)$ is a union of exactly $p^{n-1}+1$ elements of 
$\cS$. Moreover, if $f$ is bent then for every nonzero $a\in\fp^{2n}$, $a^{\perp}$ contains either 
\begin{align}
 \label{eq:bentcond7}
&\mbox{none of the subspaces}\ T\in\cS\ \mbox{or}\\
 \label{eq:bentcond8}
&\mbox{one subspace}\ T \subseteq f^{-1}(t)\ \mbox{for some}\ t\in\fp^*
\end{align}
and these are the only cases possible. Also, $\cW_f(0)=p^n\zeta_p^z$ and if \eqref{eq:bentcond7} holds then $\cW_f(a)=p^n$ and if 
\eqref{eq:bentcond8} holds then $\cW_f(a)=p^n\zeta_p^t$. If $f$ is bent then it is a regular bent functions. 
\end{corollary}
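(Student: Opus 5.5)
The plan is to obtain this corollary from Theorem~\ref{th:PSplus} by setting $m=n$. The hypotheses agree: $\cS$ is a partial $n$-spread, $m=n<2n$, $p^n>3$, and $f(0)=z\neq 0$. The divisibility $m\mid n$ holds automatically, so it disappears from the bentness criterion.

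First I would substitute $m=n$ into the counts. The number of elements of $\cS$ making up $f^{-1}(j)\cup\{0\}$ for $j\in\fp^*\setminus\{z\}$ becomes
\[\frac{p^{n-1}(p^n-1)}{p^n-1}=p^{n-1},\]
and for $f^{-1}(z)$ it becomes $\frac{(p^{n-1}+1)(p^n-1)}{p^n-1}=p^{n-1}+1$. In the conditions on $a^{\perp}$ we get $N=\frac{p^{n-1}(p^{0}-1)}{p^n-1}=0$, $\frac{p^{n-m}-1}{p^m-1}=0$ and $p^{n-m}=1$. So \eqref{eq:bentcond5} says that $a^{\perp}$ contains no subspace of $\cS$, which is \eqref{eq:bentcond7}. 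Likewise \eqref{eq:bentcond6} says that $a^{\perp}$ contains exactly one $T\subseteq f^{-1}(t)$ and none from the other level sets, which is \eqref{eq:bentcond8}. The Walsh values $\cW_f(0)=p^n\zeta_p^z$, $p^n$ and $p^n\zeta_p^t$, and regularity, transfer verbatim.

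The main point needing care is the logical form. In the corollary, bentness is equivalent to the counting condition alone, and the $a^{\perp}$ conditions appear only as a consequence. So I must show that when $m=n$ the counting condition already forces bentness. The cleanest route is the direct Walsh computation from the proof of Theorem~\ref{th:PSplus}, which uses only the counts and Lemma~\ref{le:sspsum}. With $m=n$ it gives
\[\cW_f(a)=p^n+p^n\sum_{i=1}^{s}(\zeta_p^{j_i}-1),\]
where $s$ is the number of elements of $\cS$ contained in $a^{\perp}$.

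It then remains to show $s\le 1$. Any two distinct $n$-dimensional elements of $\cS$ meet only in $0$, so together they span all of $\fp^{2n}$. They therefore cannot both lie in the hyperplane $a^{\perp}$. Hence $s\in\{0,1\}$, and $\cW_f(a)$ equals $p^n$ or $p^n\zeta_p^{t}$, so $f$ is regular bent. The converse direction and the Walsh values follow from Theorem~\ref{th:PSplus}.

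One point to check is the case $t=z$ in \eqref{eq:bentcond8}. Here $T\subseteq f^{-1}(z)$ together with $0$, and $\zeta_p^{z}$ is still the correct value from the formula above, so this case is harmless.
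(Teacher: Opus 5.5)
Your proposal is correct and follows the paper's route, which obtains the corollary simply as the case $m=n$ of Theorem~\ref{th:PSplus}. Your explicit observation that two distinct elements of $\cS$ span $\fp^{2n}$, so $a^{\perp}$ contains at most one of them, is what justifies dropping the hyperplane conditions from the bentness criterion, a step the paper leaves implicit.
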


\begin{definition}
The class of (regular) bent functions satisfying conditions of Theorem~\ref{th:PSplus} is called $p$-ary $m$-$\cPS^{+}$ class.
\end{definition}

\section{Generalized Dillon Bent Functions}
 \label{sec:monom}
Take function $f:\F_{p^{2n}}\mapsto\fp$ given by a sum of traces of monomials $\alpha_j x^{t(p^m-1)}$ where $m$ divides $n$, 
$0<t<\frac{p^n-1}{p^m-1}$ and $\alpha_t\in\F_{p^{2n}}$. Then $f$ is said to consist of {\em generalized Dillon exponents}. Let $\cS$ be the 
Desarguesian $m$-spread of $\F_{p^{2n}}$ (it consists of multiplicative cosets of $\fpm$). Then for each $T\in\cS$, $f$ is constant on 
$T^*$ and conditions of Theorem~\ref{th:PSminus} are satisfied. If $f$ is bent then it belongs to the $m$-$\cPS^{-}$ class and we call it 
{\em generalized Dillon bent function}. 

%The trace may be taken either from $\F_{p^{2n}}$ or from some subfields thereof, as certain values of $j$ will guarantee that 
%$x^{j(p^m-1)}$ belongs to a proper subfield of $\F_{p^{2n}}$ and then the choices for $\alpha_j$ and the trace function are made 
%accordingly.

Here we address the question whether a monomial function with a generalized Dillon exponent can be bent (but not in the $n$-$\cPS^{-}$ 
class). We start by recalling the result that is used further. 

Let $q$ be a power of a prime $p$ and $\chi$ be the canonical additive character of $\fq$ defined by $\chi(a)=e^{2\pi i\Tr(a)/p}$ for 
$a\in\fq$. For any $a\in\fq$, define an {\em $k$-dimensional Kloosterman sum} as 
\[K_k(a)=\sum_{x_1,\dots,x_k\in\fq^*}\chi\left(x_1+\cdots+x_k+ax_1^{-1}\cdots x_k^{-1}\right)\enspace.\]
Distribution of the values of the $k$-dimensional Kloosterman sum is a major problem of algebraic geometry.

\begin{theorem}[\cite{Mo07}]
 \label{th:mKl_sum}
For any $a\in\fqk^*$,
\[\sum_{x\in\fqk^*}\chi\left(\Tr_k(ax^{q-1})\right)=(-1)^{k-1}(q-1)K_{k-1}(\N(a))\enspace,\] 
where $\Tr_k()$ and $\N(a)=a^{(q^k-1)/(q-1)}$ are the trace and the norm functions from $\fqk$ onto $\fq$. If the sum is over leaders of 
the multiplicative cosets of $\fq^*$ then 
\[\sum_{x\in\fqk^*/\fq^*}\chi\left(\Tr_k(ax^{q-1})\right)=(-1)^{k-1}K_{k-1}(\N(a))\enspace.\] 
\end{theorem}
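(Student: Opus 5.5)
We prove the first identity by reducing the sum over $x$ to a sum over the image of $x\mapsto x^{q-1}$, and then rewriting that sum through the norm. The second identity then follows by dividing by $q-1$.

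\textbf{Step 1: pass to elements of norm one.} The map $x\mapsto x^{q-1}$ sends $\fqk^*$ onto the subgroup $H=\{y\in\fqk^*:\N(y)=1\}$. This subgroup has order $(q^k-1)/(q-1)$, and every fibre of the map is a coset of $\fq^*$, so it has exactly $q-1$ elements. Hence
\[\sum_{x\in\fqk^*}\chi(\Tr_k(ax^{q-1}))=(q-1)\sum_{y\in H}\chi(\Tr_k(ay))\enspace.\]
The same observation proves the second statement once the first is established. The summand depends only on the coset $x\fq^*$, and summing over coset leaders removes exactly the factor $q-1$.

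\textbf{Step 2: detect the norm condition with characters.} Substituting $w=ay$ turns the condition $\N(y)=1$ into $\N(w)=\N(a)=:b$. We must therefore evaluate
\[S(b)=\sum_{w\in\fqk,\ \N(w)=b}\chi(\Tr_k(w))\]
and show that $S(b)=(-1)^{k-1}K_{k-1}(b)$. We expand the indicator of the condition $\N(w)=b$ using multiplicative characters $\psi$ of $\fq^*$:
\[[\N(w)=b]=\frac{1}{q-1}\sum_{\psi}\psi(\N(w))\,\overline{\psi}(b)\enspace.\]
This gives
\[S(b)=\frac{1}{q-1}\sum_\psi\overline{\psi}(b)\,G_{k}(\psi\circ\N)\enspace,\]
where $G_k$ denotes the Gauss sum over $\fqk$ with respect to the canonical additive character.

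\textbf{Step 3: apply Davenport--Hasse.} The Davenport--Hasse lifting relation gives
\[G_k(\psi\circ\N)=(-1)^{k-1}G(\psi)^k\enspace,\]
where $G(\psi)$ is the Gauss sum over $\fq$; this holds for trivial $\psi$ as well under the convention $G(\text{trivial})=-1$. On the other side, a standard Fourier computation over $\fq$ gives
\[K_{k-1}(b)=\frac{1}{q-1}\sum_\psi\overline{\psi}(b)\,G(\psi)^k\enspace.\]
To verify this, write $K_{k-1}(b)$ as a sum over $x_1,\dots,x_k\in\fq^*$ with $x_1\cdots x_k=b$, detect that product condition with the characters $\psi$, and observe that the sum then factors into $k$ Gauss sums. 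Comparing the two expressions yields
\[S(b)=(-1)^{k-1}K_{k-1}(b)\enspace,\]
and multiplying by $q-1$ gives the first formula.

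The main obstacle is the bookkeeping of the trivial character and of the sign conventions. With $G(\text{trivial})=-1$, the trivial-character terms must match on both sides of the comparison in Step 3, and the factor $(-1)^{k-1}$ in the Davenport--Hasse relation must align with the normalization used for $K_{k-1}$. The cleanest check is to recompute the trivial-character contribution directly: the sum of $\chi(\Tr_k(w))$ over $w\neq0$ equals $-1$, while $\chi$ summed over $\fq^*$ gives $(-1)^k$ after taking the $k$-th power of $-1$, up to the stated sign. Then confirm that the sign is correct in the small case $k=2$.
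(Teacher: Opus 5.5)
Your proof is correct and complete. Each step checks out:
\begin{itemize}
\item The map $x\mapsto x^{q-1}$ has kernel $\fq^*$. Its image has order $(q^k-1)/(q-1)$ and lies in $\ker\N$, which has the same order because $\N$ is onto $\fq^*$, so the image equals $\ker\N$.
\item The translation $w=ay$ identifies the remaining sum with $S(\N(a))$.
\item Since $\chi\circ\Tr_k$ is the canonical additive character of $\fqk$ (by transitivity of the trace), Davenport--Hasse applies and gives $S(b)=\frac{(-1)^{k-1}}{q-1}\sum_\psi\overline{\psi}(b)G(\psi)^k$.
\item Writing $K_{k-1}(b)$ as a sum over $k$-tuples in $(\fq^*)^k$ with product $b$ gives exactly $\frac{1}{q-1}\sum_\psi\overline{\psi}(b)G(\psi)^k$.
\item The coset-leader version follows because $x^{q-1}$ is constant on each coset $x\fq^*$.
\end{itemize}

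The paper does not prove this statement; it quotes it from Moisio, so there is no internal argument to compare against. Your Gauss-sum/Davenport--Hasse computation is the standard way to establish such norm--Kloosterman identities.

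Your closing paragraph is more tentative than it needs to be. Davenport--Hasse holds for the trivial multiplicative character as long as the additive character is nontrivial. The trivial-character terms also agree exactly, since $\sum_{w\in\fqk^*}\chi(\Tr_k(w))=-1=(-1)^{k-1}(-1)^k$. So no separate sanity check at $k=2$ is needed.
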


Now, using Theorem~\ref{th:mKl_sum}, we check when a generalized Dillon monomial $p$-ary functions is in the $m$-$\cPS^{-}$ class. This is 
a generalization of the known classes of Boolean bent functions of Dillon \cite{Di74} and ternary bent functions from 
\cite[Theorem~2]{HeKh06_1}. 

\begin{theorem}
 \label{th:monom}
Let $p$ be an odd prime and $n$, $m$, $t$ be positive integers such that $m$ divides $2n$ and $\gcd(t,(p^{2n}-1)/(p^m-1))=1$. For any 
$a\in\F_{p^{2n}}^*$, define $p$-ary function $f:\F_{p^{2n}}\mapsto\fp$ as 
\[f(x)=\Tr_{2n}\big(a x^{t(p^m-1)}\big)\enspace.\]
Then the Walsh transform coefficients of $f$ are equal to 
\begin{align*}
\cW_a(0)&=1+(-1)^{k-1}(p^m-1)K_{k-1}(\N(a))\quad\mbox{and}\\
\cW_a(-1)&=1+(-1)^k K_{k-1}(\N(a))+p^m\sum_{x\in U}\omega^{f(x)}\enspace,
\end{align*}
where $k=2n/m$ and $U=\{x\in\F_{p^{2n}}^*/\F_{p^m}^*\ |\ \Tr^{2n}_m(x)=0\}$. Assume $p^n>3$ and $f$ is bent, then $m$ divides $n$, 
\begin{equation}
 \label{eq:Kl_cond}
(-1)^{k-1}K_{k-1}(\N(a))=(p^n-1)/(p^m-1)\enspace, 
\end{equation}
and $f$ is regular from the $m$-$\cPS^{-}$ class. Moreover, $\cW_a(0)=p^n$ and
\[\cW_a(-1)=1-(p^n-1)/(p^m-1)+p^m\sum_{x\in U}\omega^{f(x)}\enspace.\]
\end{theorem}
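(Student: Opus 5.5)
The plan is to compute both Walsh coefficients by splitting $\F_{p^{2n}}^*$ into the multiplicative cosets of $\F_{p^m}^*$ and using Theorem~\ref{th:mKl_sum}. The bentness statement will then follow from Theorem~\ref{th:PSminus} applied to the Desarguesian partial spread.

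\textbf{Reduction to $t=1$.} Put $q=p^m$, so that $\F_{p^{2n}}=\F_{q^k}$ with $k=2n/m$. As $x$ runs over $\F_{q^k}^*$, the element $x^{q-1}$ runs, each value hit $q-1$ times, over the subgroup of order $(q^k-1)/(q-1)$. Since $\gcd(t,(q^k-1)/(q-1))=1$, the map $y\mapsto y^t$ permutes this subgroup. Hence
\[\sum_{x\in\F_{q^k}^*}\omega^{\Tr_{2n}(ax^{t(q-1)})}=\sum_{x\in\F_{q^k}^*}\omega^{\Tr_{2n}(ax^{q-1})}\enspace.\]
Writing $\Tr_{2n}=\Tr_m\circ\Tr^{2n}_m$, the character $\omega^{\Tr_{2n}(\cdot)}$ becomes the canonical additive character of $\F_q$ composed with $\Tr^{2n}_m$. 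Theorem~\ref{th:mKl_sum} then applies directly. Adding the term $x=0$, which contributes $1$ since $f(0)=0$, gives the formula for $\cW_a(0)$.

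\textbf{Computing $\cW_a(-1)$.} Here the exponent is $f(x)+\Tr_{2n}(x)$. Write every nonzero $x$ as $x=cu$, with $u$ a coset leader of $\F_{q^k}^*/\F_q^*$ and $c\in\F_q^*$. Because $c^{t(q-1)}=1$, we have $f(cu)=f(u)$, and $\Tr_{2n}(cu)=\Tr_m(c\,\Tr^{2n}_m(u))$. The inner sum over $c\in\F_q^*$ therefore equals $q-1$ if $\Tr^{2n}_m(u)=0$ and $-1$ otherwise. This yields
\[\cW_a(-1)=1-\sum_{u}\omega^{f(u)}+q\sum_{u\in U}\omega^{f(u)}\enspace.\]
By the second formula of Theorem~\ref{th:mKl_sum}, together with the same $t$-reduction applied to coset leaders, $\sum_u\omega^{f(u)}=(-1)^{k-1}K_{k-1}(\N(a))$. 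Substituting this gives the stated expression for $\cW_a(-1)$.

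\textbf{Bentness.} Assume $f$ is bent and $p^n>3$. First I would dispose of $m=2n$: then $k=1$ and $x^{t(p^{2n}-1)}=1$ on $\F_{p^{2n}}^*$, so $f$ is constant on the nonzero elements and cannot be bent. So $m<2n$. Let $\cS$ consist of those cosets $\xi^i\F_q$ on which $f$ is nonzero; this is a partial $m$-spread. Each $T\in\cS$ has $f$ constant on $T^*$, and $f^{-1}(0)=\F_{p^{2n}}\setminus\bigcup_{T\in\cS}T^*$. Theorem~\ref{th:PSminus} then gives that $m\mid n$, that $f$ is regular in the $m$-$\cPS^-$ class, and that $\cW_a(0)=p^n$. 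Equating $p^n$ with $1+(-1)^{k-1}(p^m-1)K_{k-1}(\N(a))$ produces \eqref{eq:Kl_cond}, and substituting \eqref{eq:Kl_cond} into the formula for $\cW_a(-1)$ gives the final identity.

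I expect the only delicate points to be the permutation argument in the reduction to $t=1$, and checking that the hypotheses of Theorem~\ref{th:PSminus} are met. In particular one must exclude $m=2n$, and choose $\cS$ as exactly the cosets where $f\neq0$ rather than the whole Desarguesian spread.
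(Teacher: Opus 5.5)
Your proposal is correct and follows essentially the paper's route: split $\F_{p^{2n}}^*$ into cosets of $\F_{p^m}^*$, use $\gcd(t,(p^{2n}-1)/(p^m-1))=1$ to reduce to Theorem~\ref{th:mKl_sum}, then apply Theorem~\ref{th:PSminus} for the bentness conclusions. Two of your steps are actually needed to meet the hypotheses of Theorem~\ref{th:PSminus}, and the paper passes over both silently: excluding $m=2n$ explicitly, and taking $\cS$ to be only the cosets on which $f\neq 0$ rather than the whole Desarguesian spread.
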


\begin{proof}
Any $i\in\{0,\dots,p^{2n}-2\}$ can be uniquely written as $i=d j_1+j_2$ for some $j_1\in\{0,\dots,p^m-2\}$ and $j_2\in\{0,\dots,d-1\}$, 
where $d=(p^{2n}-1)/(p^m-1)$. Select and fix a primitive element $\xi$ of $\F_{p^{2n}}$. Any $x\in\F_{p^{2n}}^*$ can be uniquely written as 
$\xi^i$. Then 
\[\Tr_{2n}\big(a x^{t(p^m-1)}+x\big)=\Tr_{2n}\big(a\xi^{t(p^m-1)j_2}+\xi^{d j_1+j_2}\big)=\Tr_{2n}\left(a\beta^{j_2}+\gamma^{j_1}\xi^{j_2}\right)\] 
and $\Tr_{2n}\big(a x^{t(p^m-1)}\big)=\Tr_{2n}\left(a\beta^{j_2}\right)$, where $\beta=\xi^{t(p^m-1)}$ and $\gamma=\xi^d\in\fpm^*$. 

The Walsh transform coefficient of $f$ evaluated at $-1$ is equal to 
\[\cW_a(-1)=\sum_{x\in\F_{p^{2n}}}\omega^{\Tr_{2n}\left(a x^{t(p^m-1)}+x\right)}=1+\sum_{j_2=0}^{d-1}\omega^{\Tr_{2n}\left(a\beta^{j_2}\right)}\sum_{j_1=0}^{p^m-2}\omega^{\Tr_{2n}\left(\gamma^{j_1}\xi^{j_2}\right)}\enspace.\] 
Consider the latter sum 
\[\sum_{j_1=0}^{p^m-2}\omega^{\Tr_{2n}\left(\gamma^{j_1}\xi^{j_2}\right)}=
\sum_{j_1=0}^{p^m-2}\omega^{\Tr_m\left(\gamma^{j_1}\Tr^{2n}_m(\xi^{j_2})\right)}=
\left\{\begin{array}{ll}
p^m-1,&\ \mbox{if}\quad\Tr^{2n}_m(\xi^{j_2})=0\\
-1,&\ \mbox{otherwise}
\end{array}\right.\] 
since the multiplicative order of $\gamma$ is equal to $p^m-1$ and thus, $\gamma$ is a primitive element of $\fpm$ and $\gamma^{j_1}$ runs 
through all nonzero elements of $\fpm$. Further, $\Tr^{2n}_m(x)=0$ for $p^{2n-m}-1$ elements of $\F_{p^{2n}}^*$ that reduce to 
$(p^{2n-m}-1)/(p^m-1)$ elements $x\in\F_{p^{2n}}^*/\F_{p^m}^*$ (this number of values of $j_2$ satisfy $\Tr^{2n}_m(\xi^{j_2})=0$). $U$ 
denotes the set containing all such $x$. 

Therefore,
\begin{align*}
\cW_a(-1)&=1-\sum_{j_2=0}^{d-1}\omega^{\Tr_{2n}\left(a\beta^{j_2}\right)}+p^m\sum_{\substack{j_2=0\\ \Tr^{2n}_m(\xi^{j_2})=0}}^{d-1}\omega^{\Tr_{2n}\big(a\beta^{j_2}\big)}\\
&=1-\sum_{j=0}^{d-1}\omega^{\Tr_{2n}\left(a\beta^{j}\right)}+p^m\sum_{x\in U}\omega^{f(x)}\enspace.
\end{align*}
Note that the sets $\{\beta^{j}\ |\ j=0,\dots,d-1\}$ and $\{\xi^{j(p^m-1)}\ |\ j=0,\dots,d-1\}$ are equal because $t$ and 
$(p^{2n}-1)/(p^m-1)$ are coprime and thus, using Theorem~\ref{th:mKl_sum} with $q=p^m$ and $k=2n/m$, 
\begin{align*}
\cW_a(-1)&=1-\sum_{j=0}^{d-1}\omega^{\Tr_{2n}\left(a\xi^{j(p^m-1)}\right)}+p^m\sum_{x\in U}\omega^{f(x)}\\
&=1-\sum_{j=0}^{d-1}\chi\left(\Tr^{2n}_m\big(a\xi^{j(p^m-1)}\big)\right)+p^m\sum_{x\in U}\omega^{f(x)}\\
&=1+(-1)^k K_{k-1}(\N(a))+p^m\sum_{x\in U}\omega^{f(x)}\enspace.
\end{align*}

Further, for any $b\in\fpn^*$ we have
\begin{align*}
\cW_a(-b)&=S_{ab^{-t(p^m-1)}}(-1)\\
&=1+(-1)^k K_{k-1}(\N(a))+p^m\sum_{{\substack{j=0\\\Tr^{2n}_m(\xi^j)=0}}}^{d-1}\omega^{\Tr_{2n}\big(ab^{-t(p^m-1)}\beta^{j}\big)}=\cW_a(b) 
\end{align*}
since $\N(ab^{-t(p^m-1)})=\big(ab^{-t(p^m-1)}\big)^{(p^{2n}-1)/(p^m-1)}=a^{(p^{2n}-1)/(p^m-1)}=\N(a)$. 

The Walsh transform coefficient of $f$ evaluated at $0$ is equal to 
\begin{align*}
\cW_a(0)&=\sum_{x\in\F_{p^{2n}}}\omega^{\Tr_{2n}\left(ax^{t(p^m-1)}\right)}=1+(p^m-1)\sum_{j=0}^{d-1}\omega^{\Tr_{2n}\left(a\beta^{j}\right)}\\
&=1+(-1)^{k-1}(p^m-1)K_{k-1}(\N(a))\enspace.
\end{align*}

If $p^n>3$ and $f$ is bent, then by Theorem~\ref{th:PSminus}, $m$ divides $n$ and $f$ is a regular bent function from $m$-$\cPS^{-}$ class, 
$\cW_a(0)=p^n$ that gives 
\[(-1)^{k-1}K_{k-1}(\N(a))=(p^n-1)/(p^m-1)\enspace.\] 
Then also
\[\cW_a(-1)=1-(p^n-1)/(p^m-1)+p^m\sum_{x\in U}\omega^{f(x)}\enspace.\]
\qed
\end{proof}

\bibliographystyle{splncs04}
\bibliography{IEEEabrv,mrabbrev,all}

@STRING{IEEE_J_IT         = "{IEEE} Trans. Inf. Theory"}

@book{Ca20,
   author = {Claude Carlet},
   title = {Boolean Functions for Cryptography and Coding Theory},
   publisher = {Cambridge University Press},
   address = {Cambridge},
   year = {2020}
}

@article{Ho04_1,
   author = {Xiang-Dong Hou},
   title = {$p$-{A}ry and $q$-ary Versions of Certain Results about Bent Functions and Resilient Functions},
   journal = FINFA,
   volume = {10},
   number = {4},
   pages = {566--582},
   year = {2004},
   month = oct
}

@article{KuScWe85,
   author = {P. Vijay Kumar and Robert A. Scholtz and Lloyd R. Welch},
   title = {Generalized Bent Functions and Their Properties},
   journal = JCOMTA,
   volume = {40},
   number = {1},
   pages = {90--107},
   year = {1985},
   month = sep
}

@article{HeKh06_1,
   author = {Tor Helleseth and Alexander Kholosha},
   title = {Monomial and Quadratic Bent Functions over the Finite Fields of Odd Characteristic},
   journal = IEEE_J_IT,
   volume = {52},
   number = {5},
   pages = {2018--2032},
   year = {2006},
   month = may
}

@article{Mo07,
   author = {Marko Moisio},
   title = {On the number of rational points on some families of {F}ermat curves over finite fields},
   journal = FINFA,
   volume = {13},
   number = {3},
   pages = {546--562},
   year = {2007},
   month = jul
}

@article{LiLu14,
   author = {Petr Lison\v{e}k and Hui Yi Lu},
   title = {Bent functions on partial spreads},
   journal = DESCC,
   volume = {73},
   number = {1},
   pages = {209--216},
   year = {2014},
   month = oct
}

@article{AnMei22,
   author = {Nurdag\ddot{u}l Anbar and Wilfried Meidl},
   title = {Bent partitions},
   journal = DESCC,
   volume = {90},
   number = {4},
   pages = {1081--1101},
   year = {2022},
   month = apr
}

@inproceedings{Ny91,
   author = {Kaisa Nyberg},
   title = {Constructions of bent functions and difference sets},
   booktitle = {Advances in Cryptology - EuroCrypt '90},
   series = {Lecture Notes in Computer Science},
   volume = {473},
   editor = {Ivan Bjerre Damg{\aa}rd},
   pages = {151--160},
   publisher = {Springer-Verlag},
   address = {Berlin},
   year = {1991}
}

@phdthesis{Di74,
   author = {John F. Dillon},
   title = {Elementary {H}adamard Difference Sets},
   school = {University of Maryland},
   year = {1974}
}

@string{DESCC = "Des. Codes Cryptogr."}

@string{FINFA = "Finite Fields Appl."}

@string{JCOMTA = "J. Combin. Theory Ser. A"}

\end{document}